\newcommand{\I}{\mathrm{i}}
\newcommand{\mc}[1]{\mathcal{#1}}
\newcommand{\wt}[1]{\widetilde{#1}}
\newcommand{\abs}[1]{\left\lvert#1\right\rvert}
\newcommand{\norm}[1]{\left\lVert#1\right\rVert}
\newcommand{\Or}{\mathcal{O}}
\newcommand{\RR}{\mathbb{R}}
\newcommand{\ZZ}{\mathbb{Z}}
\newcommand{\QETU}{{QETU}\xspace}
\newtheorem{thm}{\protect\theoremname}
\newtheorem{lem}[thm]{\protect\lemmaname}
\newtheorem{defn}[thm]{\protect\definitionname}
\providecommand{\definitionname}{Definition}
\providecommand{\assumptionname}{Assumption}
\providecommand{\corollaryname}{Corollary}
\providecommand{\lemmaname}{Lemma}
\providecommand{\propositionname}{Proposition}
\providecommand{\remarkname}{Remark}
\providecommand{\theoremname}{Theorem}
\tikzset{%
  highlight/.style={rectangle,rounded corners,fill=blue!15,draw,fill opacity=0.3,thick,inner sep=0pt}
}
\newcommand{\DeptMath}{Department of Mathematics, University of California, Berkeley, CA 94720, USA}
\newcommand{\LBLMath}{Applied Mathematics and Computational Research Division, Lawrence Berkeley National Laboratory, Berkeley, CA 94720, USA}
\newcommand{\CIQC}{Challenge Institute of Quantum Computation, University
of California, Berkeley, CA 94720, USA}
\newcommand{\prepor}{$\mathrm{PREPARE}$\xspace}
\newcommand{\selector}{$\mathrm{SELECT}$\xspace}
\begin{document}

\title{Multi-level quantum signal processing with applications to ground state preparation using fast-forwarded Hamiltonian evolution}
\author{Yulong Dong}
\email{dongyl@berkeley.edu}
\affiliation{\DeptMath}
\author{Lin Lin}
\email{linlin@math.berkeley.edu}
\affiliation{\DeptMath}
\affiliation{\LBLMath}
\affiliation{\CIQC}
\date{\today}

\begin{abstract}
The preparation of the ground state of a Hamiltonian $H$ with a large spectral radius has applications in many areas such as electronic structure theory and quantum field theory. Given an initial state with a constant overlap with the ground state,
and assuming that the Hamiltonian $H$ can be efficiently simulated with an ideal fast-forwarding protocol,  we first demonstrate that employing a linear combination of unitaries (LCU) approach can prepare the ground state at a cost of $\mathcal{O}(\log^2(\|H\| \Delta^{-1}))$ queries to controlled Hamiltonian evolution. Here $\|H\|$ is  the spectral radius of $H$ and $\Delta$ the spectral gap. However, traditional Quantum Signal Processing (QSP)-based methods fail to capitalize on this efficient protocol, and its cost scales as $\mathcal{O}(\|H\| \Delta^{-1})$. To bridge this gap, we develop a multi-level QSP-based algorithm that exploits the fast-forwarding feature. This novel algorithm not only matches the efficiency of the LCU approach when an ideal fast-forwarding protocol is available, but also exceeds it with a reduced cost that scales as $\mathcal{O}(\log(\|H\| \Delta^{-1}))$. Additionally, our multi-level QSP method requires only $\mathcal{O}(\log(\|H\| \Delta^{-1}))$ coefficients for implementing single qubit rotations. This eliminates the need for constructing the PREPARE oracle in LCU, which prepares a state encoding $\mathcal{O}(\|H\| \Delta^{-1})$ coefficients regardless of whether the Hamiltonian can be fast-forwarded.

\end{abstract}

\maketitle

\section{Introduction}\label{sec:intro}
Recent years have witnessed considerable advancements in the development of quantum algorithms for scientific computing. This progress is fueled by novel construction of efficient quantum circuits, capable of representing a wide array of non-unitary matrices. They extend beyond the initial goal of simulating unitary dynamics, as envisioned by Benioff and Feynman \cite{Benioff1980, Feynman1982}, and introduce new possibilities for applications. Noteworthy developments include methods based on Quantum Phase Estimation, such as the HHL algorithm \cite{HarrowHassidimLloyd2009}, the Linear Combination of Unitaries (LCU)  \cite{BerryChildsCleveEtAl2014}, and methods based on the Quantum Signal Processing (QSP) \cite{LowChuang2017}, exemplified by the Quantum Singular Value Transformation (QSVT) \cite{GilyenSuLowEtAl2019, MartynRossiTanEtAl2021}. QSP-based algorithms are especially effective for representing various matrix functions with minimal ancillary qubit requirements.

The original formulation of QSP-based methods requires matrix access models using block encodings \cite{LowChuang2019}. It has been generalized to  utilize the Hamiltonian evolution input model, which can be implemented  through e.g., Trotter expansion methods~\cite{DongLinTong2022,SilvaBorgesAolita2022,MotlaghWiebe2023,BerryMotlaghPantaleoniEtAl2024}. Trotter based Hamiltonian simulation can be  easier to implement on near term hardware, and has some unique advantages such as commutator error scaling \cite{ChildsSuTranEtAl2021} and efficient simulation in the low-energy subspace \cite{SahinogluSomma2020,ZlokapaSomma2024}. Additionally, Hamiltonian simulation can be performed efficiently in the interaction picture \cite{LowWiebe2019}, and it allows for the possibility of fast-forwarded simulations \cite{BerryAhokasCleveEtAl2007,AtiaAharonov2017,GuSommaSahinoglu2021fast,AnLiuWangEtAl2022}.

When integrating advanced Hamiltonian simulation techniques with QSP-based methods, a significant challenge arises: QSP relies on polynomial transformations within the bounded interval $[-1,1]$. Many practical Hamiltonians in physics and chemistry \cite{AruteAryaBabbushEtAl2020,BauerBravyiMottaEtAl2020,LeeLeeZhaiEtAl2023} possess large spectral radii. Consequently, Hamiltonians must be rescaled to ensure their spectra fit within this interval. This rescaling process increases the polynomial degrees required to implement a matrix function, which reduces the efficiency of QSP-based methods. Conversely, LCU methods do not require such a rescaling operation. Denote the spectral radius of a Hamiltonian $H$ by $\norm{H}$. In the ideal fast-forwarding regime (see \cref{sec:problem} for the definition), the cost for implementing the Hamiltonian simulation can be independent of $\norm{H}$. Under such assumptions, the cost for
implementing the SELECT oracle in LCU, which directly queries the Hamiltonian evolution oracle, is only logarithmic on $\norm{H}$, while the cost for implementing QSP-based methods is linear in $\norm{H}$. This difference highlights a gap in our understanding of these quantum primitives, and a natural question follows:

\vspace{.5em}

\textit{Can we develop a quantum algorithm that can leverage fast-forwarded Hamiltonian evolution and bridge the performance gap between LCU-based and QSP-based methods?}

\vspace{.5em}

In this paper, we provide a positive answer to this question in one setting.
For the task of ground state preparation, we introduce a multi-level QSP-based method. This method requires only a logarithmic number of queries to the Hamiltonian evolution oracle in $\norm{H}$, while also significantly reducing the number of ancilla qubits.

To our knowledge, the complexity analysis of LCU-based methods for preparing the ground state using fast-forwarded Hamiltonian evolution has not appeared in the literature.  Although the SELECT oracle in LCU is very efficient, the PREPARE oracle, which is a unitary that prepares the Fourier coefficients, still incurs a cost that grows linearly with $\norm{H}$. Our new multi-level approach does not rely on such PREPARE oracles, and the gate complexity (for gates other than Hamiltonian simulation) provably scales logarithmically with $\norm{H}$ under idealized fast-forwarding conditions (see the main results in \cref{sec:main_results} and \cref{tab:compare_algs_state}).

The remainder of this paper is organized as follows: We begin with an overview of the problem statement and models for fast-forwarding in \cref{sec:problem}. Our primary findings are then highlighted in \cref{sec:main_results}. In \cref{sec:lcu}, we present the analysis of LCU-based method for ground state preparation using fast-forwarded Hamiltonian evolution. \cref{sec:ground-state} is dedicated to discussing QSP-based methods for ground state preparation utilizing fast-forwarded Hamiltonian evolution. The application of the standard QSP-based method is scrutinized in \cref{sec:standard-qsp}. We then lay the motivation of our algorithmic proposal through the lens of matrix polynomial in \cref{sec:connection-qsp}.  Progressing further, \cref{sec:multi-qsp} introduces our innovative multi-level QSP-based approach, aimed at overcoming the identified challenges. This section will also present a comprehensive complexity analysis, highlighting the method’s potential for improvements in time complexity and ancilla requirement. In \cref{sec:alpha-soft}, a detailed analysis when using another fast-forwarding model is presented. We conclude in \cref{sec:conclusion}, reflecting on the key achievements of our study and outlining promising directions for future research.

\section{Problem statement}\label{sec:problem}

Without loss of generality, let $H$ be a positive semidefinite Hermitian matrix with a large spectral radius $\norm{H}$. The Hamiltonian evolution with time $t$ is defined as $e^{- \I H t}$. The evolution of a general Hamiltonian is equivalent to that of a positive semidefinite one up to a global phase, or up to a time-dependent $Z$-rotation when querying the controlled Hamiltonian evolution. Denote the eigenpairs of the Hamiltonian $H$ as $\{ \lambda_k, \ket{\psi_k} \}$, ordered such that $\lambda_0 < \lambda_1 \le \lambda_2 \le \cdots$. Our objective is to prepare the ground state $\ket{\psi_0}$, starting from an initial guess $\ket{\phi}$ that overlaps with the ground state: $\abs{\braket{\phi | \psi_0}} = \gamma > 0$. The initial guess is prepared by an oracle $O_I$. Additionally, we assume the existence of a value $\mu$ and a predefined spectral gap $\Delta$, satisfying $\lambda_0 \le \mu - \Delta / 2 < \mu + \Delta / 2 \le \lambda_1$.  The parameter $\mu$ can be determined using e.g., a binary search strategy, and the cost of the algorithm is expected to depend inverse polynomially on $\gamma$ \cite{LinTong2020,DongLinTong2022}. 

Additionally, we assume that the quantum evolution of $H$ can be
\emph{fast-forwarded}~\cite{BerryAhokasCleveEtAl2007,AnLiuWangEtAl2022}. One
model for describing the fast-forwarded Hamiltonian evolution is to assume that
the Hamiltonian evolution can be implemented accurately within a cutoff time
$\tau$ with constant cost \cite{AtiaAharonov2017,GuSommaSahinoglu2021fast}. 
Another model is that the cost of the Hamiltonian evolution grows sublinearly with the evolution time.
For instance, in Ref. \cite{AnLiuWangEtAl2022}, the authors propose a model of
fast-forwarded Hamiltonian evolution where the implementation cost increases as
the square root of the evolution time. Both models can be described by the following definition of fast-forwarded Hamiltonian evolution.

\begin{defn}[Fast-forwarded Hamiltonian evolution]\label{def:tau-cutoff-fast-forwarding}
The quantum evolution of an $n$-qubit Hamiltonian $H$ can be fast-forwarded with parameter $(\tau, \alpha)$ if for any precision $\delta > 0$, there exists $0 \le \alpha \le 1$, $\tau > 0$ independent of $\norm{H}$, and a $t$-dependent unitary matrix $O_H(t)$ such that
\begin{enumerate}
\item $O_H(t)$ is accurate: 
\begin{equation}\label{eqn:HE}
    \sup_{\abs{t} \le \tau} \norm{O_H(t) - e^{- \I H t}} \le \delta,
\end{equation}
\item $O_H(t)$ is efficient: for any $\abs{t} \le \tau$ , the gate complexity
  for implementing $O_H(t)$ is upper bounded by $\Or(C(n) + (\abs{t} \cdot \norm{H})
  ^\alpha)$.
\end{enumerate}
\end{defn}
Here, the additive term $C(n)$ represents the cost for preparing the Hamiltonian into a fast forwardable form. For example, $C(n)=n^2$ when the Givens rotation is used to diagonalize a quadratic fermionic Hamiltonian~\cite{KivlichanMcCleanWiebeEtAl2018,GuSommaSahinoglu2021fast}. This term is independent of the spectral radius $\norm{H}$, which is the focus of this paper. Therefore when presenting complexity analysis, this additive term is omitted for simplicity. For the second term,  it is important to emphasize that the $\abs{t} \cdot \norm{H}$ dependency not
only serves to make the complexity dimensionless but also to prevent potential
shortcuts by rescaling the Hamiltonian. For instance, in scenarios where the
exponents differ, such as $\Or(\abs{t}^{\alpha_1} \norm{H}^{\alpha_2})$ with
$\alpha_2 > \alpha_1$, we can then simulate a scaled Hamiltonian $H/c$ to time $ct$. The resulting gate complexity becomes $\Or(c^{\alpha_1 - \alpha_2} \abs{t}^{\alpha_1} \norm{H}
^{\alpha_2})$. Taking the limit $c\to \infty$ leads to a vanishing cost, which
is a contradiction.

In this paper, we focus on two regimes of fast-forwarding. The first regime is
when the cutoff time $\tau$ is relatively small compared to the simulation time $t$. For simplicity, we assume the
short-time Hamiltonian evolution can be synthesized at a constant cost, which
means $\alpha = 0$. This case is referred to as \textit{$\tau$-cutoff
fast-forwarded Hamiltonian evolution}.  The second regime is when the cutoff time $\tau$
is large enough which is assumed to be larger than the intended simulation time
in the algorithm. Consequently, any Hamiltonian evolution needed in the
implementation of the algorithm requires only one query to $O_H$. This case is
referred to as \textit{$\alpha$-soft fast-forwarded Hamiltonian evolution}.
When $\tau$ is sufficiently large (longer than the simulation time $t$) and
$\alpha = 0$, \cref{def:tau-cutoff-fast-forwarding} means that the Hamiltonian
evolution with arbitrary evolution time can be implemented accurately at a
constant cost. This scenario is referred to as \textit{ideal fast-forwarding}.

\cref{def:tau-cutoff-fast-forwarding} suggests that algorithms based on fast-forwarded Hamiltonian evolution should minimize the number of queries by using long-time evolution whenever possible.
First, the error of $r$ queries to $O_H$ is upper bounded by $r\delta$ according to \cref{eqn:HE}. Hence, we need to minimize the number of queries to minimize the error growth in the algorithm. Second, to synthesize a time evolution of a given time $0 < t < \tau$, we may have a choice to implement it through two steps $O_H(c t) O_H((1 - c) t)$ for some $0 < c < 1$. However, due to the concavity of the gate complexity, we have $(ct)^\alpha + ((1 - c)t)^\alpha \ge t^\alpha$. Consequently, splitting the evolution time into smaller time steps increases the gate  complexity.

When the cutoff time $\tau$ is small relative to the simulation time $t$, long-time Hamiltonian evolution can be effectively synthesized as $e^{-\I t H} = O_H^r(t / r)$ by stitching together short-time evolutions satisfying $t/r\le \tau$. The cost of implementing this long-time evolution should grow linearly with the total evolution time $t$.

\begin{lem}[Long-time Hamiltonian evolution synthesis using fast-forwarded assumption]\label{lma:fast-forward}
    Based on \cref{eqn:HE}, implementing a Hamiltonian evolution with evolution time parameter $t$ requires $r := \lceil t / \tau \rceil$ queries to $O_H$ with error $r \delta$.
\end{lem}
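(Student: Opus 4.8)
The plan is to synthesize the long-time evolution by concatenating $r$ copies of a short-time evolution, each falling within the cutoff window $\tau$, and then to bound the accumulated error by a telescoping argument over products of unitaries.

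First I would set $r := \lceil t/\tau \rceil$ and note that $\abs{t/r} \le \tau$ by construction, so Eq.~\eqref{eqn:HE} applies to $O_H(t/r)$ and gives $\norm{O_H(t/r) - e^{-\I H t/r}} \le \delta$. Since $H$ is time-independent, the exact evolution factorizes exactly as $e^{-\I H t} = \bigl(e^{-\I H t/r}\bigr)^r$, so it suffices to compare the unitary $\bigl(O_H(t/r)\bigr)^r$, which by definition uses exactly $r$ queries to $O_H$, against this product.

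Next I would invoke the standard telescoping bound: for unitaries $A_1,\dots,A_r$ and $B_1,\dots,B_r$ one has $\norm{A_1\cdots A_r - B_1\cdots B_r} \le \sum_{j=1}^r \norm{A_j - B_j}$, which follows by inserting the intermediate products $A_1\cdots A_{j-1} B_j \cdots B_r$, using $\norm{A_j}=\norm{B_j}=1$, and applying the triangle inequality. Taking $A_j = O_H(t/r)$ and $B_j = e^{-\I H t/r}$ yields $\norm{\bigl(O_H(t/r)\bigr)^r - e^{-\I H t}} \le r\delta$, which is exactly the claim.

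There is essentially no serious obstacle here; the only point requiring (routine) care is the telescoping estimate, which I would either state as a one-line sub-lemma or simply cite as standard. One may additionally remark, using the concavity observation in the preceding paragraph, that partitioning $t$ into $r$ \emph{equal} steps of length $t/r \le \tau$ is the cost-optimal choice, so that $r = \lceil t/\tau\rceil$ is the natural (and minimal) query count rather than a wasteful one.
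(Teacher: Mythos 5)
Your proposal is correct as far as it goes, but it and the paper's proof emphasize complementary halves of the statement. You prove \emph{sufficiency}: the factorization $e^{-\I H t} = (e^{-\I H t/r})^r$ with $\abs{t/r}\le\tau$, followed by the standard telescoping bound, gives an $r$-query implementation with error $r\delta$; the paper's proof relegates exactly this to its final sentence ("telescoping and the triangle inequality") and instead spends nearly all of its effort on \emph{necessity} — that the evolution cannot be synthesized with strictly fewer than $r$ queries. The paper's argument for necessity is that each query to $O_H$ advances the simulation time by at most $\tau$, so $\wt{r} < r$ queries reach total time at most $\wt{r}\tau \le (r-1)\tau < t$ (checked separately for $t/\tau$ integer and non-integer). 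Your closing remark attempts to cover this direction via the concavity observation, but that argument only shows that, \emph{for a fixed number of steps}, equal splitting minimizes the gate complexity $\sum_j \abs{t_j}^\alpha$; it does not show that fewer than $r$ steps is impossible, which is a statement about the $\tau$-cutoff rather than about the cost exponent $\alpha$. Since the lemma is phrased as "requires $r$ queries," you should replace the concavity remark with the one-line cutoff argument above (or explicitly declare that you are only proving achievability). With that substitution your proof matches the paper's in substance.
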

\begin{proof}
    We will show that the Hamiltonian evolution to time $t$ can not be implemented when the number of queries to $O_H$ is strictly less than $r$. According to \cref{eqn:HE}, each query to $O_H$ can advance the simulation time by at most $\tau$. If there exists a protocol that implements $e^{- \I t H}$ with $\wt{r}$ queries to $O_H$, where $\wt{r}$ is strictly less than $r$, then the total propagation time is $\wt{r} \tau \le (r - 1) \tau$. For cases where $t / \tau \in \ZZ$, it holds that $r = t / \tau$, leading to $\wt{r} \tau < r \tau = t$. Conversely, for $t / \tau \notin \ZZ$, we find $r = 1 + \lfloor t / \tau \rfloor < 1 + t / \tau$, resulting in $\wt{r} \tau < t$. So the number of queries cannot be smaller than $r$ to reach the total evolution time $t$. The error in implementation is then estimated by employing both telescoping and the triangle inequality.
\end{proof}

To simplify the presentation of complexity analysis, we will not consider the implementation error of Hamiltonian evolution in the main text. In \cref{sec:HE-error}, we analyze the propagation of the error in the quantum circuit due to imperfect Hamiltonian evolution.

\section{Main results}\label{sec:main_results}
The main intuition of our algorithm comes from the observation that applying a
filter function effectively reduces the spectral radius of a Hamiltonian by
eliminating certain high-energy components. Given the assumption of
fast-forwarding, some filter functions are easy to implement. Our strategy
involves applying a sequence of easily implementable filter
functions across multiple levels. At each level of this process, the spectral
radius of the \emph{effective} Hamiltonian is  dyadically reduced. When the cutoff time is large, this process continues until the effective Hamiltonian contains only the ground-state component. However, when the cutoff time is small, the process continues until the effective spectral radius is bounded by a constant due to the limited efficiency of implementing long-time evolution for higher resolution. In the case of a small cutoff time, at the final step, we apply a clean-up filter function to the final effective Hamiltonian. The cost of implementing this final step only depends on the absolute spectral gap
$\Delta$ and is independent of the spectral radius of the original Hamiltonian
$\norm{H}$. This approach differs from the standard QSP-based method, implemented through Quantum Eigenvalue Transformation of Unitary Matrices with Real Polynomials (QETU) outlined in Ref.
\cite{DongLinTong2022}. In the standard QSP-based method, the filter function is directly applied to the original Hamiltonian, resulting in a polynomial of high degree. A comparison between the filter functions used in these methods is shown in \cref{fig:filters_compare}. It demonstrates that to achieve a similar filtering result, the polynomial degree required by our multi-level QSP-based method can be significantly smaller than that used in standard QSP-based and LCU-based methods. Our new method
circumvents the high computational costs due to large $\norm{H}$ by leveraging
fast-forwarded Hamiltonian evolution. A conceptual illustration of this process is depicted in \cref{fig:multi-level-visualize}. This illustration demonstrates the rapid elimination of high-energy components in
the initial state, resulting in an effectively small-radius Hamiltonian at the
final level. It is important to note that each step of the
filtering process does not need to be perfect in order to control the overall
error of the process. This impact on complexity is quantitatively analyzed in the following sections.

\begin{figure}[htbp]
    \centering
    \includegraphics[width = .5\textwidth]{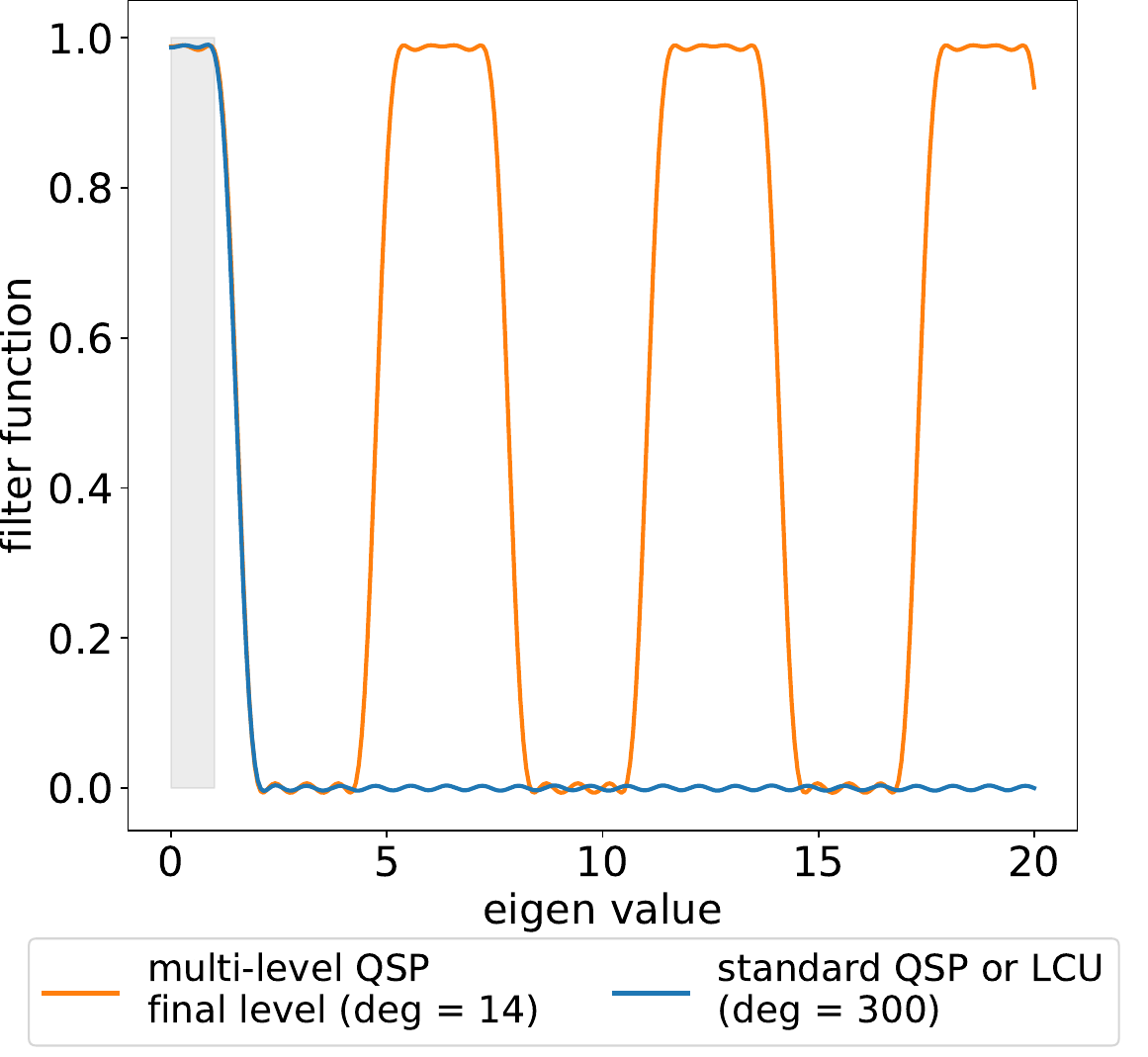}
    \caption{Comparing the filter functions used in the standard QSP-based method and at the final level of the multi-level QSP-based method. We assume the spectral radius of the Hamiltonian is $\norm{H} = 20$ and the spectral gap is $\Delta = 1$. The shaded area stands for the region containing only ground-state energy.}
    \label{fig:filters_compare}
\end{figure}

\begin{figure}[htbp]
    \centering
    \subfloat[]{\includegraphics[width = \textwidth]{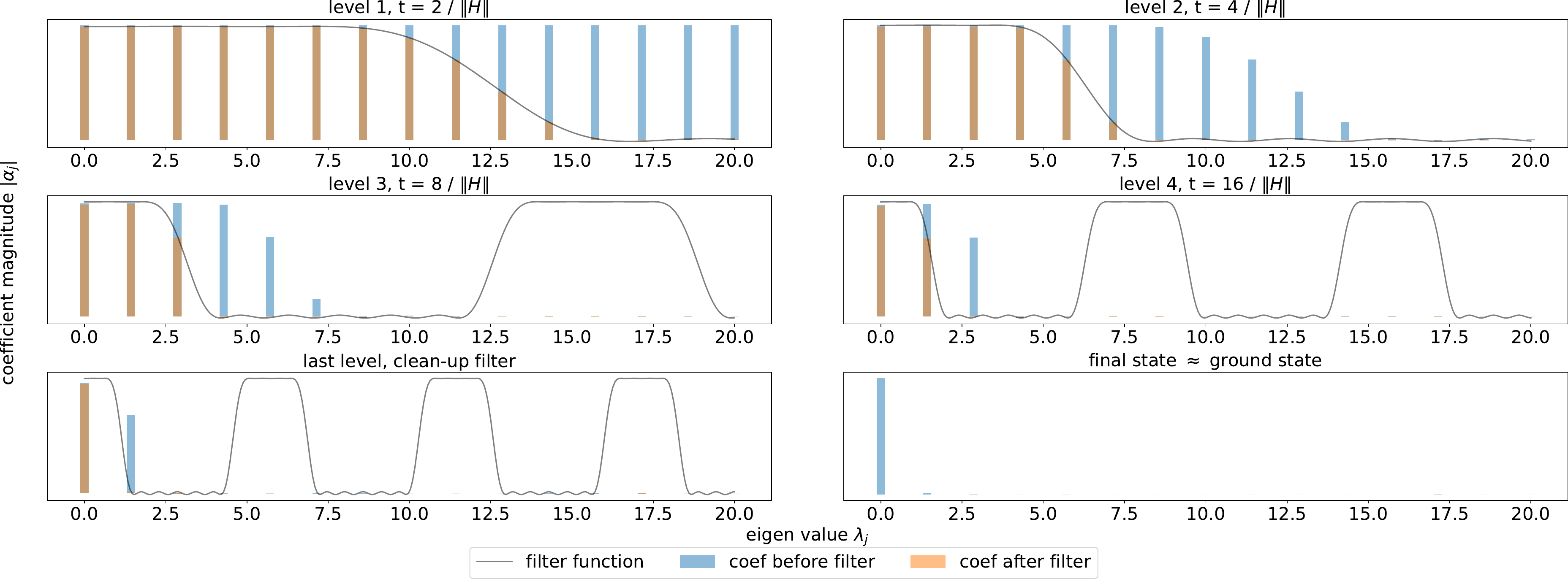}}

    \subfloat[]{\includegraphics[width = \textwidth]{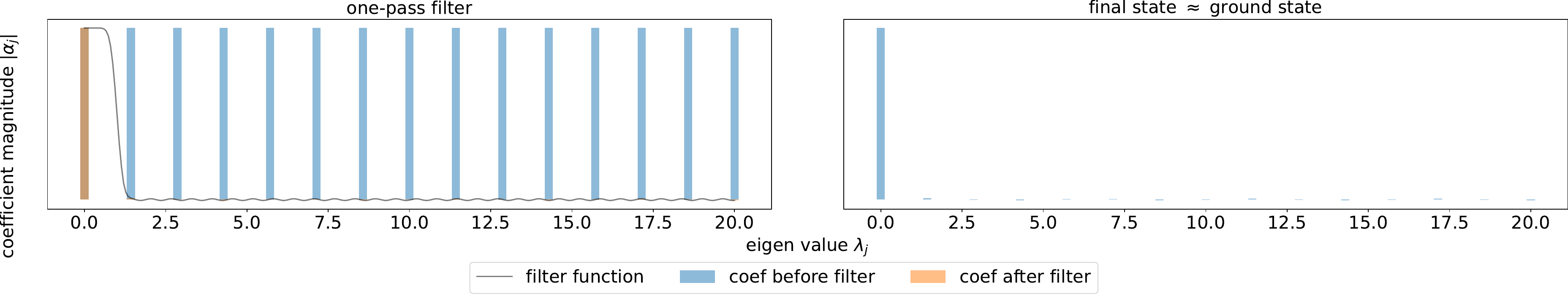}}
    \caption{An example illustrating methods for ground state preparation. Panel (a) visualizes the multi-level QSP-based method, while panel (b) displays the process using standard QSP-based or LCU-based methods. For simplicity, we consider a Hamiltonian with equally spaced eigen values whose spectral radius is $20$. The height of each bar stands for the magnitude of each coefficient with respect to the eigen basis before and after applying the filter function. }
    \label{fig:multi-level-visualize}
\end{figure}

In our implementation, each filter function is realized using QETU circuits, as outlined in Ref. \cite{DongLinTong2022}, with the specific circuit structure depicted in \cref{fig:qetu}. An implementation of ground state preparation utilizing intermediate measurements and post-selection is illustrated in \cref{fig:multi-level-meas}. The advantage of this circuit is that it does not involve multi-qubit control operations and the circuit depth is short. The corresponding limitation is the loss of coherence due to measurements. To prepare the ground state with a success probability $2/3$, the circuit needs to be run $\Or(\gamma^{-2})$ times. 

To improve the scaling with respect to $\gamma$, \cref{fig:multi-level} introduces a coherent quantum circuit implementation. We first present our complexity results using $\tau$-cutoff fast-forwarded Hamiltonian evolution. By incorporating a procedure called the compression gadget~\cite{LowWiebe2019,FangLinTong2023}, this circuit can be executed coherently with only $\Or(\log \log(\|H\|))$ additional ancilla qubits. By deferring measurements monitoring the success at each circuit level to the end of the circuit, the compression gadget allows a compact and coherent execution of the entire filtering process. Employing amplitude amplification, this coherent multi-level circuit achieves a quadratic speedup in terms of the initial overlap $\gamma$ from $\Or(\gamma^{-2})$ to $\Or(\gamma^{-1})$. This matches the lower bound for preparing the ground state under similar assumptions~\cite{LinTong2020,MandeWolf2023}.

\begin{figure}[htbp]
    \centering
    \subfloat[\label{fig:qetu}]{\includegraphics[width=\textwidth]{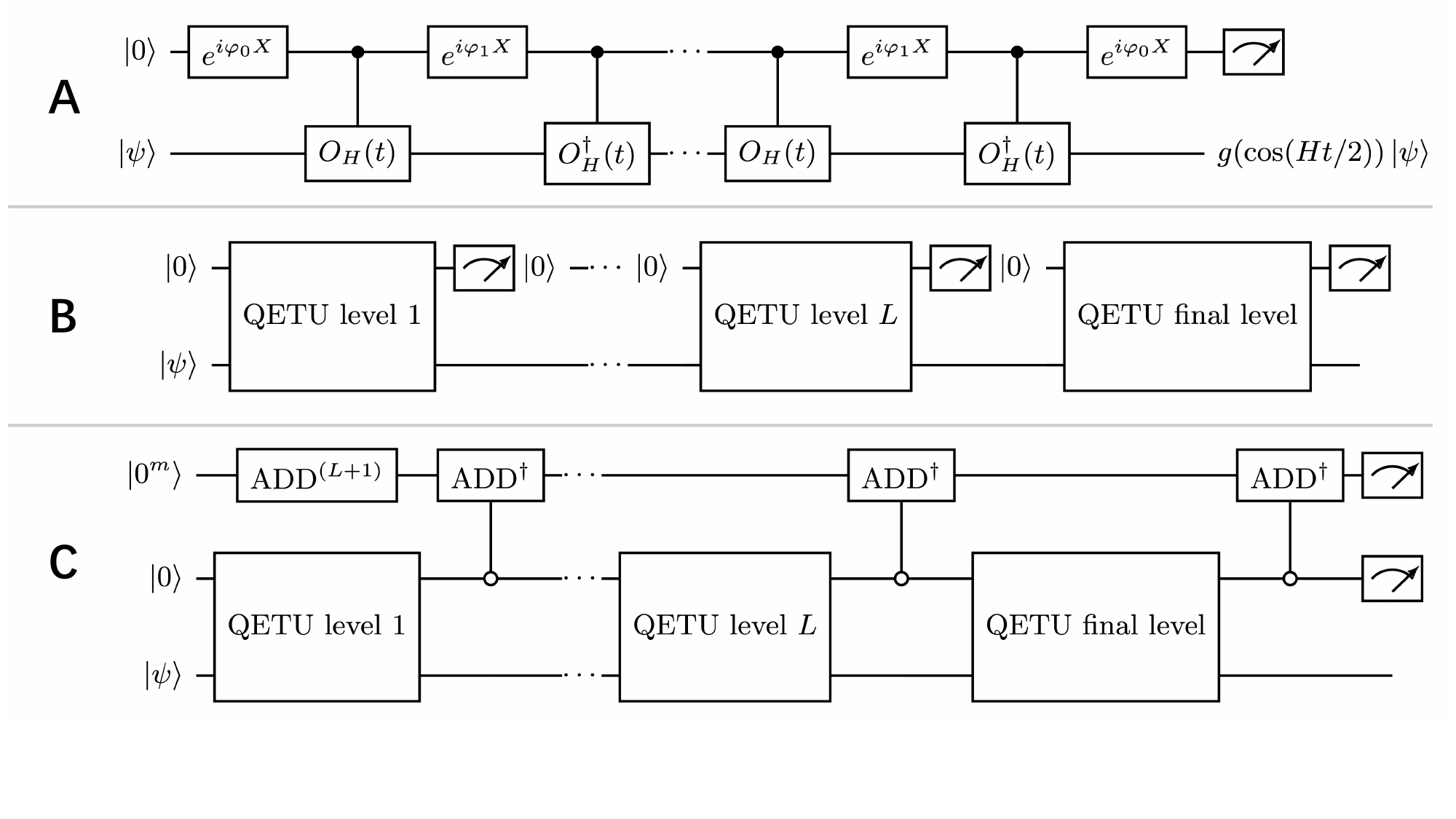}}

    \subfloat[\label{fig:multi-level-meas}]{\includegraphics[width=\textwidth]{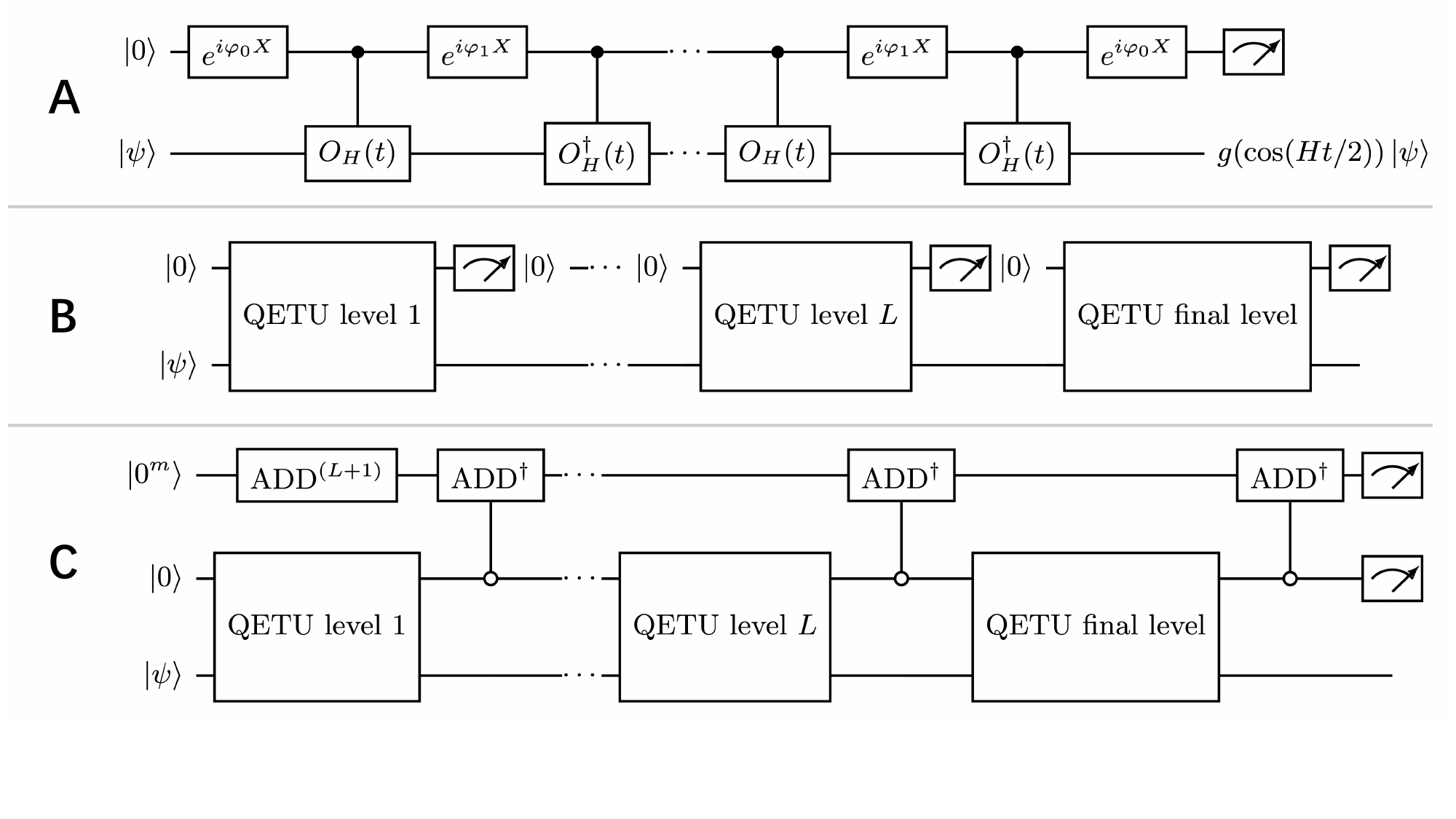}}

    \subfloat[\label{fig:multi-level}]{\includegraphics[width=\textwidth]{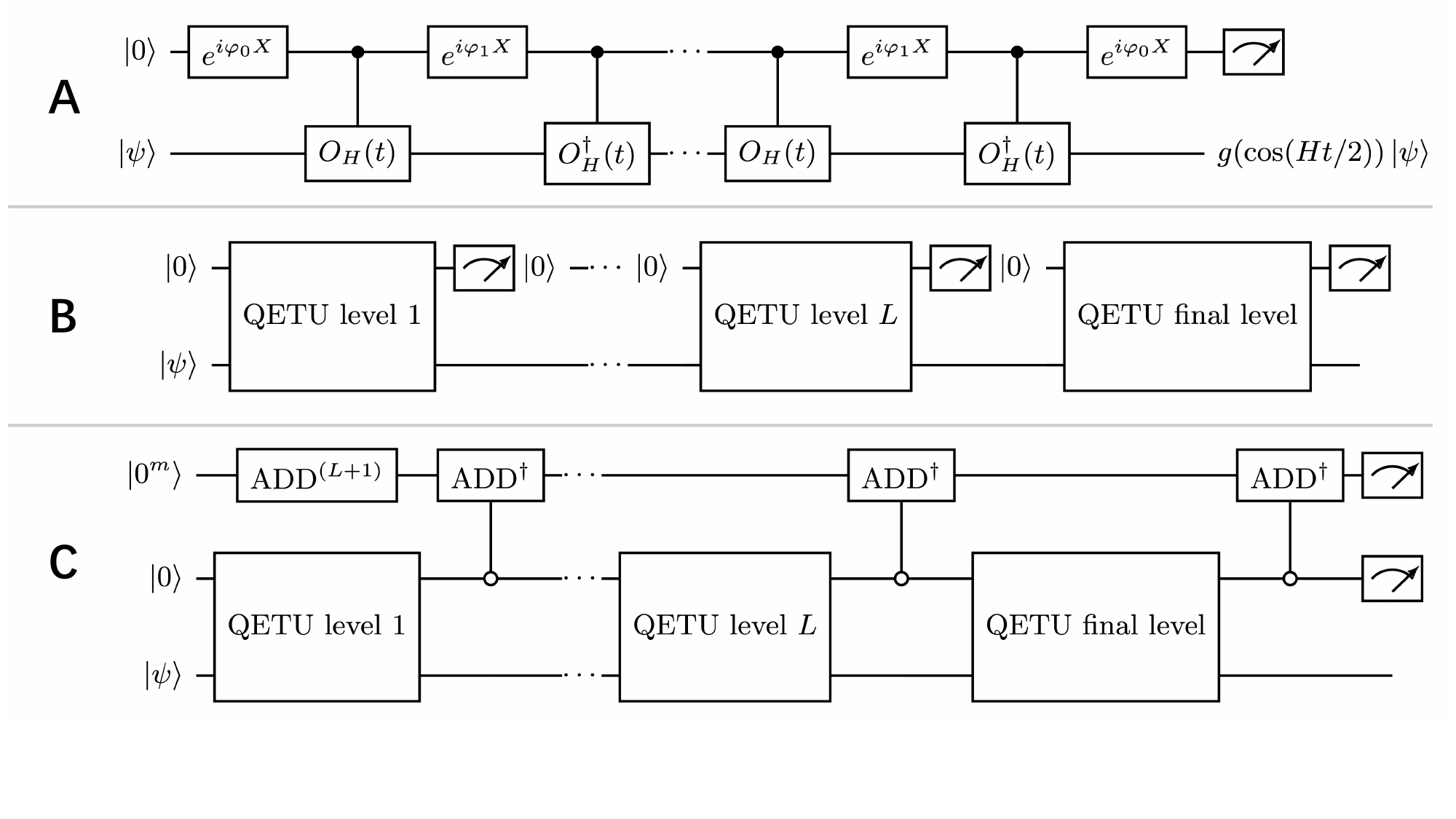}}
    
    \caption{Multi-level QSP-based quantum circuits. (a) QETU circuit which is used in implementing filter functions. (b) A multi-level circuit based on intermediate measurements and post-selection. (c) A coherent implementation of the multi-level circuit based on compression gadget. Here, $m = \lceil \log_2(L + 2) \rceil$ extra ancilla qubits are introduced to track the success of each level. }
    \label{fig:circuit}
\end{figure}

We present a detailed summary of the implementation cost for our algorithm and its comparison with other methods in \cref{tab:compare_algs_state}. Our proposed algorithms demonstrate superior performance over all other methods examined. While a key subroutine of our approach is QETU, we significantly reduce the cost with respect to the spectral radius by adopting a multi-level strategy. In comparison to the LCU approach, in the case of ideal fast-forwarding where $\alpha = 0$ and $\tau$ is large, our method achieves logarithmic gate complexity in terms of the spectral radius and the spectral gap, offering a quadratic improvement over the gate complexity of the SELECT oracle in LCU. Additionally, our method shows an exponential reduction in terms of $\norm{H}$ and $\Delta$ in the requirement for ancilla qubits compared to LCU, and our requirement on the number of ancilla qubits is also independent with the precision parameter $\epsilon$. It is also important to note that LCU requires a \prepor oracle which encodes target Fourier coefficients, the implementation of which can be notably resource-intensive. For example, in the case involving $d$ coefficients lacking a specific structure, a leading-edge solution outlined in Ref. \cite{BabbushGidneyBerryEtAl2018} requires $\Or(d + \log(\epsilon^{-1}))$ additional T gates and $\Or(\log(d \epsilon^{-1}))$ extra ancilla qubits.  When dealing with a large spectral radius, which implies a sharp filter function and a large $d$, the cost of this oracle could substantially increase and impact the overall complexity adversely. This is reflected in the complexity results of LCU-based method in \cref{tab:compare_algs_state}. Due to the costly \prepor oracle, which does not benefit from fast-forwarding, the overall gate complexity of the LCU-based method scales linearly with $\norm{H}$ for any $0 \le \alpha \le 1$. In contrast, our multi-level QSP-based method achieves a speedup, manifesting as either a logarithmic dependency on $\log(\norm{H})$ when $\alpha = 0$ or a sublinear scaling of $\norm{H}^\alpha$ when $0 < \alpha < 1$. This enhancement is attributed to the multi-level filtering strategy that combines the efficiency of the \selector oracle of LCU and the flexibility of QSP.

\begin{table}[th]
\label{tab:ground_state_prep}
    \centering
    \makegapedcells
        \scalebox{.8}{\begin{tabular}{p{2.5cm}|c|c|c}
        \hline
\hline
                                                                                             & This work, multi-level QSP                                                                                                                                                                              & LCU                                                                                                                                                                                                                                                                                                                            & Standard QSP                                                                                                                                     \\ \hline
\begin{tabular}[c]{@{}c@{}}gate complexity\\ case 1: \\ (small $\tau$, $\alpha = 0$)\end{tabular}       & \begin{tabular}[c]{@{}c@{}}$\wt{\Or}\left(\gamma^{-1} \left(\log(\norm{H}) + (1+\tau^{-1})\Delta^{-1}  \right) \log\left(\epsilon^{-1}\right) \right)$ 
\end{tabular} & \begin{tabular}[c]{@{}c@{}}$\wt{\Or}\left(\gamma^{-1} \log^2(\norm{H}\Delta^{-1} \log(\epsilon^{-1}))\right.$\\ $\left.+  \tau^{-1} \gamma^{-1} \Delta^{-1} \log\left( \norm{H} \right) \log\left(\epsilon^{-1}\right) \right)$\\ + $\wt{\Or}(\gamma^{-1} \norm{H} \Delta^{-1} \log(\epsilon^{-1}))$\end{tabular} & \begin{tabular}[c]{@{}c@{}}$\wt{\Or}(\max\{1, \tau^{-1} \norm{H}^{-1}\} $\\ $\gamma^{-1} \Delta^{-1} \norm{H} \log(\epsilon^{-1}))$\end{tabular} \\ \hline
\begin{tabular}[c]{@{}c@{}}gate complexity\\ case 2: \\ (large $\tau$, $\alpha = 0$)\end{tabular}       & $\wt{\Or}(\gamma^{-1}\log(\norm{H} \Delta^{-1}) \log(\epsilon^{-1}))$                                                                                                                               & \begin{tabular}[c]{@{}c@{}}$\wt{\Or}(\gamma^{-1}\log^2(\norm{H} \Delta^{-1} \log(\epsilon^{-1})))$\\ + $\wt{\Or}(\gamma^{-1} \norm{H} \Delta^{-1} \log(\epsilon^{-1}))$\end{tabular}                                                                                                                                           & $\wt{\Or}(\gamma^{-1} \norm{H} \Delta^{-1} \log(\epsilon^{-1}))$                                                                                 \\ \hline
\begin{tabular}[c]{@{}c@{}}gate complexity\\ case 3: \\ (large $\tau$, $0 < \alpha \le 1$)\end{tabular} & $\wt{\Or}(\gamma^{-1} \norm{H}^\alpha \Delta^{-\alpha} \log(\epsilon^{-1}))$                                                                                                                         & \begin{tabular}[c]{@{}c@{}}$\wt{\Or}(\gamma^{-1} \norm{H}^\alpha \Delta^{-\alpha} \log^\alpha(\epsilon^{-1}))$\\ + $\wt{\Or}(\gamma^{-1} \norm{H} \Delta^{-1} \log(\epsilon^{-1}))$\end{tabular}                                                                                                                               & $\wt{\Or}(\gamma^{-1} \norm{H} \Delta^{-1} \log(\epsilon^{-1}))$                                                                                 \\ \hline
\# ancilla qubits                                                                            & \begin{tabular}[c]{@{}c@{}} $\Or(\log \log (\norm{H}))$ (case 1)\\ $\Or(\log \log (\norm{H} \Delta^{-1}))$ (case 2, 3) \end{tabular}                                                                                                                                                                             & $\wt{\Or}(\log(\norm{H} \Delta^{-1} \gamma^{-1} \epsilon^{-1}))$                                                                                                                                                                                                                                                                            & $\Or(1)$                                                                                                                                         \\ \hline \hline
        \end{tabular}}
    \caption{Comparison of the performance of quantum algorithms for ground state preparation in terms of gate complexity, and the number of ancilla qubits. The Hamiltonian evolution is assumed to be fast-forwarded with parameter $(\tau, \alpha)$ (see \cref{def:tau-cutoff-fast-forwarding} for details). $\gamma$ is the overlap between the initial guess $\ket{\psi_0}$ and the ground state, $\Delta$ is a lower bound of the spectral gap, and $\norm{H}$ is the spectral radius of the Hamiltonian. The gate complexity of LCU-based method is expressed in two parts. The first term in big-O notation stands for the implementation cost of the \selector oracle, and the second term corresponds to the cost of the \prepor oracle.}
    \label{tab:compare_algs_state}
\end{table}

\section{Ground state preparation methods using  fast-forwarded Hamiltonian evolution with cutoff and linear combination of unitaries}\label{sec:lcu}

The Linear Combination of Unitaries (LCU) is a quantum algorithmic technique that implements a complex quantum operator by decomposing it into simpler, easy-to-implement unitary operations \cite{ChildsKothariSomma2017}. Typical choices of these unitary operations include Hamiltonian evolutions, executed through efficient Hamiltonian simulation algorithms, and Chebyshev polynomials, realized via quantum walks. The process employs a \prepor oracle to linearly combine these unitary elements. Consequently, designing a quantum algorithm to solve a given problem essentially becomes a task of finding the best approximation, whether through Fourier or Chebyshev series. When integrating $M$ terms in LCU, the \prepor oracle uses multi-qubit control gates and  $\Or(\log_2(M))$ ancilla qubits. Under the assumption of fast-forwarded quantum evolution, the former option, utilizing Hamiltonian evolution, is more suited for ground state preparation tasks. Essentially, this involves the implementation of a filter function designed to eliminate all excited states. The filter function that implements the ground-state projection can be expressed in the context of LCU as follows:
\begin{equation}
    f(x) = \sum_{k = 1}^M c_k e^{- \I x t_k},\ f(H) = \sum_{k = 1}^M c_k e^{- \I H t_k}.
\end{equation}
The function $f(x)$ satisfies the following conditions:
\begin{enumerate}
    \item[(A1)] $f(x) \approx 1$ for any $x < \mu - \Delta / 2$,
    \item[(A2)] $f(x) \approx 0$ for any $\mu + \Delta / 2 < x < \norm{H}$,
    \item[(A3)] and $\abs{f(x)} \le 1$ for any $0 \le x \le \norm{H}$.
\end{enumerate}

Approximating Heaviside function has been a well studied problem in the context of quantum algorithms. In \cite{LowChuang2017}, an approximation utilizing the error function is suggested. Following this, \cite{LinTong2022} offers an approximation by integrating a smeared Dirac function, contributing to the methodology of constructing such approximations. Further refining the complexity of the approximation found in \cite{LinTong2022}, Ref. \cite{WanBertaCampbell2022} presents a Fourier series that approximates the Heaviside function with explicitly defined coefficients. Our analysis begins with an approximation of the filter function needed for the LCU-based method, as derived from \cite[Lemma 1]{WanBertaCampbell2022}.

\begin{thm}[Fourier approximation to the LCU-based filter function]\label{thm:Fourier-approx}
    When $\Delta / \norm{H} < \pi / 2$, for any $\epsilon > 0$,  there is a Fourier series $f(x) = \sum_{k \in \mc{I}} c_k e^{- \I t_k x}$ where $\mc{I} = \{0\} \cup \{\pm (2k + 1)\}_{k = 0}^d$, $t_k = k / \norm{H}$, and $d = \Or(\norm{H} \Delta^{-1} \log(\epsilon^{-1}))$, which satisfies conditions (A1)-(A3) and $\norm{c}_1 := \sum_{k \in \mc{I}} \abs{c_k} = \Or(\log(d))$.
\end{thm}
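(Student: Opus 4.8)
The plan is to reduce the theorem to the explicit Fourier-series construction for the Heaviside function from \cite[Lemma 1]{WanBertaCampbell2022} by an affine change of variables, and then verify the three filter conditions and the $\ell_1$-norm bound survive this rescaling. Concretely, I would write the target filter $f$ as $f(x) = h\big(\kappa(\mu - x)\big)$ where $h$ is a smooth approximation to the Heaviside step supported in frequency space on integers, and $\kappa$ is a scaling chosen so that the ``transition window'' of width $\Delta$ in the variable $x$ is mapped to a window of width $\Theta(1)$ (or width $\Theta(\norm{H}^{-1}\cdot\text{something})$ depending on the normalization used in the cited lemma). The constraint $\Delta/\norm{H} < \pi/2$ and the prescribed node spacing $t_k = k/\norm{H}$ tell me that the intended frequency grid has spacing $1/\norm{H}$, i.e.\ the periodization length in $x$ is $2\pi\norm{H}$; since $0 \le x \le \norm{H} < 2\pi\norm{H}$, the whole spectrum fits inside one period, which is exactly what is needed for a Fourier (rather than just trigonometric-polynomial) representation to make sense as a function on the spectrum of $H$.

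Next I would carry out the following steps in order. (i) State the Wan--Berta--Campbell approximation: there is a $2\pi$-periodic (in the rescaled variable) function $\wt{h}(\theta) = \sum_{|k|\le d'} a_k e^{\I k \theta}$ with $d' = \Or(\beta^{-1}\log(\epsilon^{-1}))$ that is $\epsilon$-close to $1$ for $\theta$ below the transition region, $\epsilon$-close to $0$ above it, bounded by $1+\Or(\epsilon)$ everywhere, where $\beta$ is the relative transition width, and crucially $\sum_k |a_k| = \Or(\log d')$. (ii) Set $\beta \asymp \Delta/\norm{H}$ so that $d' = \Or(\norm{H}\Delta^{-1}\log(\epsilon^{-1}))$ as claimed, giving $d = \Or(d')$. (iii) Define $f(x) := \wt{h}\big((\mu - x)/\norm{H} \cdot c_0\big)$ for the appropriate absolute constant $c_0$; expand using $e^{\I k (\mu - x) c_0/\norm{H}} = e^{\I k \mu c_0/\norm{H}} e^{-\I (k c_0/\norm{H}) x}$, so $c_k := a_k e^{\I k \mu c_0 /\norm{H}}$ (possibly after relabeling $k \mapsto$ odd integers, which is where the index set $\mc{I} = \{0\}\cup\{\pm(2k+1)\}$ comes from — the cited lemma presumably uses only odd frequencies, e.g.\ from a sine-type series). (iv) Check (A1): $x < \mu - \Delta/2$ maps to the ``below transition'' region, so $f(x) \approx 1$; similarly (A2) for $\mu + \Delta/2 < x < \norm{H}$, using that $\norm{H}$ maps to a point still within the $[0,\pi]$-type domain by the hypothesis $\Delta/\norm{H} < \pi/2$ (this is where that hypothesis is consumed — it guarantees the full spectral interval lands inside the region where $\wt h$ is controlled, with room for the transition window). (v) Check (A3): $|f(x)| \le \|\wt h\|_\infty \le 1 + \Or(\epsilon)$, and absorb the $\Or(\epsilon)$ by rescaling $\epsilon$ or by a normalization $f \mapsto f/(1+\Or(\epsilon))$. (vi) Note $\|c\|_1 = \sum_k |a_k| = \Or(\log d') = \Or(\log d)$ since $|c_k| = |a_k|$. (vii) Confirm $t_k = k/\norm{H}$ with $c_0$ absorbed into the constant, matching the stated node values.

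I expect the main obstacle — really the main bookkeeping burden — to be step (iii)--(iv): matching the \emph{exact} index set $\mc{I} = \{0\}\cup\{\pm(2k+1)\}_{k=0}^d$ and the \emph{exact} node formula $t_k = k/\norm{H}$ to whatever parametrization \cite{WanBertaCampbell2022} uses. The cited construction is likely phrased for a step at the origin on $[-\pi,\pi]$ with only odd harmonics (since an odd function's Fourier series has only odd frequencies, and Heaviside minus $1/2$ is odd), so I will need to: (a) shift the step location from $0$ to $\mu$, which multiplies each coefficient by a pure phase and does not affect $|c_k|$ or the $\ell_1$ norm; (b) rescale the frequency so the effective period matches $2\pi\norm{H}$, turning integer harmonics $k$ into nodes $t_k = k/\norm{H}$; and (c) track that the ``$+\{0\}$'' term is the constant $1/2$ (or the DC offset making $f\approx 1$ rather than $\approx 1/2$). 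A secondary subtlety is the dependence of $d$ on $\epsilon$ \emph{inside} a logarithm in the table entries (terms like $\log(\norm{H}\Delta^{-1}\log(\epsilon^{-1}))$), but for the present theorem statement we only need $d = \Or(\norm{H}\Delta^{-1}\log(\epsilon^{-1}))$ and $\|c\|_1 = \Or(\log d)$, both of which follow directly once the parametrization is aligned. None of these steps is mathematically deep; the proof is essentially ``invoke the cited lemma and change variables,'' and the care is entirely in the constants and index conventions.
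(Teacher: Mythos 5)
Your proposal is correct and follows essentially the same route as the paper's proof: invoke \cite[Lemma 1]{WanBertaCampbell2022} for the Heaviside function, set the transition-width parameter to $\Theta(\Delta/\norm{H})$ (the paper takes $\delta = \Delta/(2\norm{H})$), and apply the affine substitution $x \mapsto (x-\mu)/\norm{H}$, which turns integer harmonics into the nodes $t_k = k/\norm{H}$ and multiplies each coefficient by a unimodular phase, leaving $\norm{c}_1 = \Or(\log d)$ unchanged. The bookkeeping concerns you flag (odd-harmonic index set, DC term, placement of the spectrum inside one period via $\Delta/\norm{H} < \pi/2$) are exactly the points the paper's proof handles, and handles in the same way.
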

\begin{proof}
    According to \cite[Lemma 1]{WanBertaCampbell2022}, for any $0 < \delta < \pi / 2$ and $\epsilon > 0$, a Fourier approximation to the Heaviside function can be derived as $\wt{f}(x) = \sum_{k \in \mc{I}} \wt{c}_k e^{- \I k x}$ where $d = \Or(\delta^{-1}\log(\epsilon^{-1}))$ which satisfies $\max_{\delta < \abs{x} < \pi - \delta} \abs{\wt{f}(x) - H(x)} \le \epsilon$, $\norm{\wt{f}}_\infty \le 1$, and $\norm{\wt{c}}_1 = \Or(\log(d))$. Let us choose $\delta = \Delta / (2 \norm{H}) \le 1 / 2$. Then, the Fourier series $f(x) := \wt{f}((x - \mu) / \norm{H})$ satisfies these conditions. The Fourier expansion is written as $f(x) = \sum_{k \in \mc{I}} \wt{c}_k e^{\I t_k \mu} e^{- \I t_k x}$. Hence, the $1$-norm of the Fourier coefficients is also bounded as $\norm{c}_1 = \norm{\wt{c}}_1 = \Or(\log(d))$. 
\end{proof}

Utilizing the proposed filter function, the matrix function $f(H) = \sum_{k \in \mc{I}} c_k e^{-\I H t_k}$ represents a series of Hamiltonian evolution matrices with varying time parameters. This sum can be efficiently implemented through the LCU technique, which employs a \selector oracle to orchestrate a set of controlled time evolution matrices $\{e^{\pm \I 2^\ell H / \norm{H}} : \ell \le \lceil \log_2(2d+1) \rceil\}$. Additionally, a \prepor oracle is utilized to merge these matrices into a linear combination with coefficients $c$. This method requires a minimum of $\Or(\log(d))$ extra ancilla qubits to encode $\norm{c}_1^{-1} f(H)$, where the prefactor $\norm{c}_1^{-1}$ is referred to as a $1$-norm scaling constant. The efficiency of the \selector oracle’s design is significantly enhanced by the fast-forwarding assumption. According to \cref{lma:fast-forward}, the implementation cost for the \selector oracle, under the fast-forwarding assumption, is
\begin{equation}
    2 \sum_{\ell = 0}^{\lceil \log_2(2d+1)\rceil} \left\lceil 2^\ell / (\tau \norm{H}) \right\rceil \le 2 \sum_{\ell = 0}^{\lceil \log_2(2d+1)\rceil} \left(1 +  2^\ell / (\tau \norm{H})\right) = \Or\left(\log(d) + d / (\tau \norm{H})\right).
\end{equation}

The $1$-norm scaling constant inherent in the LCU method, combined with the initial overlap, sets the success probability of a single LCU run at $\gamma^2 / \norm{c}_1^2$. Consequently, to achieve a constant success probability, we employ amplitude amplification, which adjusts the overall cost of the LCU-based method. Thus, the total expense of achieving a desired success rate through amplitude amplification in the LCU framework is 
\begin{equation}
\begin{split}
    &\Or(\norm{c}_1 \gamma^{-1}) \times \Or\left(\log(d) + d / (\tau \norm{H})\right)\\
    &= \wt{\Or}\left(\gamma^{-1} \log^2(\norm{H}\Delta^{-1} \log(\epsilon^{-1})) +  \tau^{-1} \gamma^{-1} \Delta^{-1} \log\left( \norm{H} \right) \log\left(\epsilon^{-1}\right) \right)
\end{split}
\end{equation}
which exhibits only a weak dependence on the spectral radius $\norm{H}$ through a logarithmic term. Here, the polynomial degree is $d = \Or(\norm{H} \Delta^{-1} \log(\tilde{\epsilon}^{-1}))$, where the approximation error of the matrix function is set to $\tilde{\epsilon} = \norm{c}_1^{-1} \gamma \epsilon$, which is smaller than the target precision $\epsilon$ to account for the normalization.

However, one notable limitation of employing the LCU method is the complexity involved in implementing the \prepor oracle, which is responsible for managing the linear combination coefficients $\{\alpha_k\}$. The necessity for additional ancilla qubits and complex multi-qubit control gates particularly complicates its deployment on early fault-tolerant quantum devices. In the following analysis, we delve into the analysis of the implementation costs associated with the \prepor oracle.

The \prepor oracle can be constructed by using the method in Ref. \cite{BabbushGidneyBerryEtAl2018}. It leverages quantum read-only memory (QROM) to encode a finite-bit approximation of each coefficient onto an ancilla register, facilitating data retrieval for the oracle's operation. This approach improves the complexity efficiency compared to other methods \cite{SandersLowSchererEtAl2019} by observing that any temporary irrelevant data held in the ancilla register commutes with the \selector oracle, which does not interfere with the computation. By emulating an alias sampling method within the temporary register, the desired quantum state is prepared directly, eliminating the need for amplitude amplification. The implementation of a \prepor oracle incurs an overall T gate complexity of $\Or(d + \log(\tilde{\epsilon}^{-1}))$. This requires an additional $\Or(\log(d) + \log(\tilde{\epsilon}^{-1}))$ ancilla qubits. The rationale for using T gate count as a measure of complexity stems from the substantial time and space resources required for implementing error corrected T gates. Consequently, the T gate count can be a faithful measure of the overall runtime of an algorithm in fault-tolerant quantum computing \cite{BabbushGidneyBerryEtAl2018}. Given that the error from implementing the \prepor oracle can be subsumed into the overall approximation error of $f$, we simplify our discussion by not separately accounting for these errors. Consequently, incorporating the \prepor oracle into the LCU framework introduces an additional T gate complexity of $\wt{\Or}(\norm{H} \Delta^{-1} \log(\gamma^{-1} \epsilon^{-1}))$, where the $\wt{\Or}$ notation conceals further logarithmic factors attributable to adjusted approximation error $\tilde{\epsilon}$. This also requires extra $\Or(\log(d) + \log(\tilde{\epsilon}^{-1}))$ ancilla qubits. Using \cref{thm:Fourier-approx}, we deduce that implementing \prepor oracle requires $\Or(\log(\norm{H} \Delta^{-1} \log(\tilde{\epsilon}^{-1})) + \log(\tilde{\epsilon}^{-1})) = \wt{\Or}(\log(\norm{H} \Delta^{-1}\gamma^{-1} \epsilon^{-1}))$ extra ancilla qubits, where the $\wt{\Or}$ notation conceals a further logarithmic factors in the logarithm.  This method accommodates the implementation despite the absence of a structured pattern in the coefficients. However, it is worth noting that certain strategies, such as the one discussed in \cite{McArdleGilyenBerta2022}, can not be directly applicable due to these structural limitations in the coefficients.

We summarize the above analysis as the following theorem.
\begin{thm}[LCU-based ground state preparation using fast-forwarded Hamiltonian evolution]\label{thm:lcu-prepare}
    Using an LCU-based method, we can prepare the ground state, with probability $2/3$, up to fidelity $1 - \epsilon$, with the following cost:
    \begin{enumerate}
        \item[(1)] $\Or(\gamma^{-1})$ queries to $O_I$,
        \item[(2)] $\wt{\Or}\left(\gamma^{-1} \log^2(\norm{H}\Delta^{-1} \log(\epsilon^{-1})) +  \tau^{-1} \gamma^{-1} \Delta^{-1} \log\left( \norm{H} \right) \log\left(\epsilon^{-1}\right) \right)$ gates to implement \selector oracle,
        \item[(3)] $\wt{\Or}(\gamma^{-1} \norm{H} \Delta^{-1} \log(\epsilon^{-1}))$ additional T gates to implement \prepor oracle,
        \item[(4)] $\wt{\Or}(\log(\norm{H} \Delta^{-1} \gamma^{-1} \epsilon^{-1}))$ ancilla qubits.
    \end{enumerate}
\end{thm}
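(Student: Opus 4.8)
The plan is to assemble Theorem~\ref{thm:lcu-prepare} from the pieces already in place, treating it as a bookkeeping exercise that combines the Fourier approximation of Theorem~\ref{thm:Fourier-approx}, the fast-forwarded evolution synthesis of Lemma~\ref{lma:fast-forward}, the standard LCU circuit with amplitude amplification, and the QROM-based \prepor construction of Ref.~\cite{BabbushGidneyBerryEtAl2018}. First I would fix the target: we want a state $\ket{\wt\psi_0}$ with $\abs{\braket{\wt\psi_0|\psi_0}}\ge 1-\epsilon$. I would run the error budget backwards: if $f(H)/\norm{c}_1$ is block-encoded and applied to $\ket{\phi}$, the (unnormalized) output is $\norm{c}_1^{-1}\sum_k c_k e^{-\I H t_k}\ket{\phi}$, whose overlap with the ground-state sector is $\gtrsim \gamma/\norm{c}_1$ by (A1), while the excited components are suppressed to $\Or(\wt\epsilon)$ by (A2)–(A3). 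To land within fidelity $1-\epsilon$ after normalization one needs the matrix-function error $\wt\epsilon = \Theta(\norm{c}_1^{-1}\gamma\epsilon)$, which by Theorem~\ref{thm:Fourier-approx} forces degree $d = \Or(\norm{H}\Delta^{-1}\log(\wt\epsilon^{-1})) = \wt\Or(\norm{H}\Delta^{-1}\log(\gamma^{-1}\epsilon^{-1}))$ and $\norm{c}_1 = \Or(\log d)$.

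\textbf{Second}, I would count the \selector cost. The \selector dispatches controlled evolutions $e^{\pm\I 2^\ell H/\norm{H}}$ for $\ell \le \lceil\log_2(2d+1)\rceil$; by Lemma~\ref{lma:fast-forward} the $\ell$-th costs $\lceil 2^\ell/(\tau\norm{H})\rceil$ queries to $O_H$, and summing the geometric series gives $\Or(\log d + d/(\tau\norm{H}))$ per application. Amplitude amplification on the $\norm{c}_1^{-1}$-scaled block encoding with initial overlap $\gamma$ requires $\Or(\norm{c}_1\gamma^{-1})$ rounds, each invoking \selector, \prepor, and a reflection; multiplying and substituting $d$ and $\norm{c}_1$ yields the stated $\wt\Or\big(\gamma^{-1}\log^2(\norm{H}\Delta^{-1}\log(\epsilon^{-1})) + \tau^{-1}\gamma^{-1}\Delta^{-1}\log(\norm{H})\log(\epsilon^{-1})\big)$. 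The $\Or(\gamma^{-1})$ queries to $O_I$ in item~(1) are simply the per-round re-preparation of $\ket{\phi}$ inside amplitude amplification.

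\textbf{Third}, for items~(3)–(4) I would invoke Ref.~\cite{BabbushGidneyBerryEtAl2018}: preparing a state encoding $d' = \Or(d)$ coefficients to precision $\wt\epsilon$ via alias sampling / QROM costs $\Or(d + \log(\wt\epsilon^{-1}))$ T gates and $\Or(\log d + \log(\wt\epsilon^{-1}))$ ancillas; folding the \prepor preparation error into the overall $f$-approximation error (so no separate accounting is needed) and multiplying the T-count by the $\Or(\gamma^{-1})$ amplification rounds gives $\wt\Or(\gamma^{-1}\norm{H}\Delta^{-1}\log(\gamma^{-1}\epsilon^{-1}))$ additional T gates. For the ancilla count in item~(4), the LCU register needs $\Or(\log d)$ qubits, the \prepor needs $\Or(\log d + \log(\wt\epsilon^{-1}))$, and substituting $d$ and $\wt\epsilon$ collapses everything to $\wt\Or(\log(\norm{H}\Delta^{-1}\gamma^{-1}\epsilon^{-1}))$.

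\textbf{The main obstacle} I anticipate is not any single estimate but the careful propagation of the three interacting error sources — the Fourier truncation error, the finite-bit \prepor error, and the amplitude-amplification imperfection from a slightly-off rotation angle — so that they consistently combine to fidelity $1-\epsilon$ with only logarithmic overhead, and simultaneously verifying that the $1$-norm renormalization $\norm{c}_1^{-1}$ does not degrade the effective overlap below $\gamma/\norm{c}_1$ (which would break the $\gamma^{-1}$ scaling of the amplification). Since the paper explicitly defers Hamiltonian-evolution implementation error to \cref{sec:HE-error}, I would state item~(2) as a gate count under exact $O_H$ and remark that the $r\delta$ error from Lemma~\ref{lma:fast-forward} is absorbed into $\wt\epsilon$ there; the rest is then routine.
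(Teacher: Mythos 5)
Your proposal is correct and follows essentially the same route as the paper's own derivation in Section~\ref{sec:lcu}: the Fourier filter of \cref{thm:Fourier-approx} with adjusted precision $\tilde\epsilon=\norm{c}_1^{-1}\gamma\epsilon$, the \selector cost $\Or(\log d + d/(\tau\norm{H}))$ via \cref{lma:fast-forward}, amplitude amplification with $\Or(\norm{c}_1\gamma^{-1})$ rounds, and the QROM-based \prepor of Ref.~\cite{BabbushGidneyBerryEtAl2018} for the T-gate and ancilla counts. The error-propagation subtleties you flag are likewise glossed over in the paper (the \prepor error is simply subsumed into the approximation error of $f$), so no further work is needed to match its level of rigor.
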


In the case of ideal fast-forwarding where $\tau$ is large enough, the implementation cost of \selector oracle is $\wt{\Or}(\mathrm{poly}\log(\norm{H}\Delta^{-1}))$ which depends weakly in both spectral radius and spectral gap in terms of squared logarithm. However, when $\tau$ is small, the implementation cost of \selector oracle is $\wt{\Or}(\Delta^{-1} \log(\norm{H}))$ in which the dependency of spectral gap is no longer logarithmic despite the weak logarithmic dependency of spectral radius.

\section{Ground state preparation methods using fast-forwarded Hamiltonian evolution with cutoff and quantum signal processing}\label{sec:ground-state}

\subsection{Standard QSP-based method }\label{sec:standard-qsp}

The approach to ground state preparation utilizing standard QSP methods, such as the \QETU method in Ref. \cite{DongLinTong2022}, applies a polynomial of $\cos(H/(2\norm{H}))$ to a \emph{normalized} Hamiltonian $H / \norm{H}$. The polynomial degree is inversely proportional to the \emph{relative} spectral gap $\Delta / \norm{H}$. This method queries the (controlled) Hamiltonian evolution $e^{-\I H / \norm{H}}$ with an evolution time of $1 / \norm{H}$. According to \cref{lma:fast-forward}, the number of queries to $O_H$ for constructing this building block  is $\max\{1, \lceil (\tau \norm{H})^{-1}\rceil\}$, introducing a multiplicative factor to the algorithm's overall complexity. Specifically, even when $H$ can be fast-forwarded to a time $\tau$ significantly greater than $\norm{H}^{-1}$, QSP is unable to leverage this capability and must continue to employ a small time step of $\norm{H}^{-1}$ for polynomial construction.
The cost of the ground state preparation based on standard QSP is determined according to \cite[Theorem 11]{DongLinTong2022}, which is summarized below.

\begin{thm}[QSP-based ground state preparation]\label{thm:standard-qsp}
    Using a QSP-based method, we can prepare the ground state, with probability $2/3$, up to fidelity $1 - \epsilon$, with the following cost:
    \begin{enumerate}
        \item[(1)] $\Or(\gamma^{-1})$ queries to $O_I$,
        \item[(2)] $\wt{\Or}(\max\{1, \tau^{-1} \norm{H}^{-1}\} \gamma^{-1} \Delta^{-1} \norm{H} \log(\epsilon^{-1}))$ gates,
        \item[(3)] two ancilla qubits.
    \end{enumerate}
\end{thm}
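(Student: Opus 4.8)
The plan is to obtain \cref{thm:standard-qsp} by combining the \QETU-based ground state preparation result \cite[Theorem 11]{DongLinTong2022} with the fast-forwarding accounting of \cref{lma:fast-forward}, treating the statement essentially as a corollary. Recall that the \QETU construction applies an even polynomial $F$ of $\cos(H/(2\norm{H}))$ to the normalized Hamiltonian $H/\norm{H}$, using one controlled-$e^{-\I H/\norm{H}}$ call per unit of polynomial degree together with single-qubit rotations on one ancilla qubit. The first step is to fix the filter polynomial: choose $F$ to approximate, to error $\wt{\epsilon}$ (set below), a shifted step function that is $\approx 1$ on $[\lambda_0/\norm{H},(\mu-\Delta/2)/\norm{H}]$ and $\approx 0$ on $[(\mu+\Delta/2)/\norm{H},1]$. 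Since the resolution scale after normalization is the relative gap $\Delta/\norm{H}$, such an $F$ has degree $\qspdeg=\Or(\norm{H}\Delta^{-1}\log(\wt{\epsilon}^{-1}))$, as in \cite{DongLinTong2022,LinTong2020}.

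Second, I would count the queries to $O_H$. The only primitive touching the Hamiltonian is the controlled evolution $e^{-\I H/\norm{H}}$, i.e.\ evolution for time $t=1/\norm{H}$. By \cref{lma:fast-forward} this costs $r=\lceil 1/(\tau\norm{H})\rceil=\Or(\max\{1,\tau^{-1}\norm{H}^{-1}\})$ queries to $O_H$, regardless of how large $\tau$ is — this is precisely where standard QSP fails to exploit fast-forwarding, because the polynomial is built from a unitary whose argument is pinned at $1/\norm{H}$ and cannot be coarsened. Hence a single run of the \QETU filter costs $\Or(\qspdeg\cdot\max\{1,\tau^{-1}\norm{H}^{-1}\})$ queries to $O_H$, $\Or(\qspdeg)$ extra single-qubit gates, one call to $O_I$, and two ancilla qubits (one for the \QETU signal qubit, one for the reflection used in amplitude amplification).

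Third, the success probability of one run is $\Theta(\gamma^2)$, since after projecting onto the ground-energy subspace the filtered state has norm $\Theta(\gamma)$; amplitude amplification boosts this to $2/3$ at the cost of an $\Or(\gamma^{-1})$ multiplicative overhead in both the number of $O_I$ calls and the number of $O_H$ queries. Multiplying through gives gate complexity $\wt{\Or}(\max\{1,\tau^{-1}\norm{H}^{-1}\}\gamma^{-1}\Delta^{-1}\norm{H}\log(\wt{\epsilon}^{-1}))$ and $\Or(\gamma^{-1})$ queries to $O_I$. Finally I would close the error budget: to reach fidelity $1-\epsilon$ after amplification one needs $\wt{\epsilon}=\Theta(\gamma\epsilon)$ to absorb the $\gamma^{-1}$ normalization of the amplified state, which changes $\log(\wt{\epsilon}^{-1})$ only by a $\log(\gamma^{-1})$ factor hidden in $\wt{\Or}$; the Hamiltonian-evolution implementation error is handled separately in \cref{sec:HE-error} and is likewise subsumed. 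The claimed bounds then follow.

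The main obstacle is not conceptual but bookkeeping: correctly propagating the three error sources (polynomial truncation, amplitude-amplification renormalization, and imperfect $O_H$) through the composition, and stating the factor $\max\{1,\tau^{-1}\norm{H}^{-1}\}$ tightly — it collapses to $1$ exactly when $\tau\norm{H}\ge 1$, while for small $\tau$ it scales as $\tau^{-1}\norm{H}^{-1}$. The substantive point, that QSP is locked to the time step $1/\norm{H}$ and so cannot profit from a large cutoff $\tau$, is already isolated in \cref{lma:fast-forward} and needs no further argument.
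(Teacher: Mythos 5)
Your proposal is correct and follows essentially the same route as the paper: the theorem is obtained by citing \cite[Theorem 11]{DongLinTong2022} for the degree-$\Or(\norm{H}\Delta^{-1}\log(\epsilon^{-1}))$ QETU filter on the normalized Hamiltonian, applying \cref{lma:fast-forward} to the fixed time step $1/\norm{H}$ to get the $\max\{1,\tau^{-1}\norm{H}^{-1}\}$ multiplicative factor, and using amplitude amplification for the $\gamma^{-1}$ overhead with one ancilla for QETU and one for amplification. Your additional bookkeeping of the error budget ($\wt{\epsilon}=\Theta(\gamma\epsilon)$) is consistent with what the paper does implicitly.
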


Two additional ancilla qubits are utilized: one for QETU implementation and another for amplitude amplification. To assess the impact of the multiplicative factor due to fast-forwarded Hamiltonian evolution, without loss of generality we assume $\norm{H}^{-1}\le\tau$, which means that the cost of implementing $e^{-\I H / \norm{H}}$ is equal to that of a single query to $O_H$. When $\norm{H}$ is relatively small, the QETU method \cite{DongLinTong2022} is an efficient method for preparing the ground state. However, when $\norm{H}$ is large,  the relative spectral gap $\Delta / \norm{H}$ becomes very small, and the overall complexity that increases linearly with $\norm{H}$. This is significantly more costly than the \selector oracle of the LCU approach which is able to take advantage of the fast forwarded Hamiltonian simulation.

\subsection{Motivation for the multi-level QSP based method}\label{sec:connection-qsp}

QSP can be understood as a nonlinear polynomial approximation framework. It effectively maps a set of parameters, termed phase factors, to a polynomial function under minimal constraints. Central to QSP-based methods is the concept of qubitization -- a two-dimensional subspace that bridges the approximation theory in $\mathrm{SU}(2)$ with high-dimensional matrix function transformations. This connection is particularly evident in QETU. Within the eigenbasis of the Hamiltonian, a QETU circuit performs interleaved $X$- and $Z$-rotations across a series of two-dimensional subspaces, each determined by the eigenvalues of the Hamiltonian. As  in \cite[Theorem 1]{DongLinTong2022}, this interleaved sequence of rotations facilitates a degree-$d$ polynomial transformation through $g$
\begin{equation}
    e^{\I \theta_0 X} e^{\I \lambda t Z} e^{\I \theta_1 X} e^{\I \lambda t Z} \cdots e^{\I \lambda t Z} e^{\I \theta_1 X} e^{\I \lambda t Z} e^{\I \theta_0 X} \ket{0} = g(\cos(\lambda t / 2)) \ket{0} + (*) \ket{1}
\end{equation}
where the parameter $t$ accounts for the evolution time of the Hamiltonian evolution. Here, the second term on the right-hand side represents an irrelevant unnormalized state. We refer to the detailed derivation in \cref{app:qetu-derivation}.

From the perspective of polynomial transformation, multi-level QETU facilitates the implementation of a product of polynomials with varying evolution times. As outlined in \cref{sec:multi-qsp}, on the eigen subspace corresponding to an eigenvalue $\lambda$, its action can be described as:
\begin{equation*}
    G(\cos(\lambda / \norm{H})) \ket{0} + (*) \ket{1} \text{ where } G(\cos( \lambda / \norm{H})) := \prod_{\ell = 1}^L g(\cos(2^{\ell-1} \lambda / \norm{H}))
\end{equation*}
The function $G$ is expressed as
\begin{equation*}
    G(x) = g(x) g(T_2(x)) g(T_4(x)) \cdots g(T_{2^{L-1}}(x))
\end{equation*}
where $T_n(x)$ stands for Chebyshev polynomials of the first kind. Consequently, $G$ is a real even polynomial of degree $(2^L - 1) d$. A direct implementation using the original QETU circuit requires $\Or(2^L d)$ queries to $e^{- \I 2 H / \norm{H}}$. Notably, the query complexity exhibits an exponential dependence on $L$ and, as a result, a linear dependence on the spectral radius $\norm{H}$. The substantial improvement in the dependence on $\norm{H}$ is attributed to a strategic implementation of $g\circ T_{2^\ell}$, taking advantage of the accelerated dynamics offered by fast-forwarded Hamiltonian evolution. As derived in \cref{sec:multi-qsp}, the multi-level strategy achieves an exponential reduction in the dependence in $\norm{H}$ which also outperforms LCU-based method in terms of a additive overall complexity (see \cref{thm:multi-level-aa}).

In \cref{fig:multi-level-visualize}, we showcase the construction of the filter function employing a multi-level strategy. It is evident that the ground-state component is consistently maintained across multiple levels, whereas the other regions are almost entirely suppressed through the application of multiple filters. This contrasts with the approach in standard QETU, which requires a very sharp filter and imposes stringent resource requirements.

\subsection{Multi-level QSP-based method}\label{sec:multi-qsp}

As detailed in the preceding subsection, employing QSP-based methods directly incurs a cost linearly dependent on the spectral radius $\norm{H}$. This limitation is mainly because the method can only query $e^{- \I H t}$ for a short time $t=1/\norm{H}$, which does not fully exploit the potential of fast-forwarded Hamiltonian evolution. In comparison, the key to the logarithmic dependence observed in LCU-based methods lies in the innovative construction of the \selector oracle, which harnesses the capabilities of fast-forwarded Hamiltonian evolution. To address this challenge, we propose a multi-level \QETU strategy in this subsection. This approach markedly improves efficiency by exponentially reducing the cost with respect to the spectral radius.

Let us consider a real even polynomial $g$ of degree $d$ which satisfies the following conditions:

\begin{enumerate}
    \item[(B1)] $\abs{g(x) - 1} \le \epsilon^\prime$ for any $x > \cos(\pi / 8) \approx 0.92$,
    \item[(B2)] $\abs{g(x)} \le \epsilon^\prime$ for any $0 < x < \cos(\pi / 4) \approx 0.71$,
    \item[(B3)] and $\abs{g(x)} \le 1$ for any $x \in [-1, 1]$.
\end{enumerate}

This polynomial can be constructed following the recipe in Ref. \cite{WanBertaCampbell2022} with complexity $d = \Or(\log(1/\epsilon^\prime))$. According to \cite[Theorem 1]{DongLinTong2022}, by performing a controlled implementation of $e^{- \I H t}$ for $d$ times, the \QETU method is capable of implementing $g(\cos(H t / 2))$. The difference between functions $f$ defined in conditions (A1-3) and $g$ defined in conditions (B1-3) arises from the intermediate cosine transformation, which reverses their monotonicity. It is important to highlight that the criteria set forth by conditions (B1-3) are markedly less stringent than those specified by conditions (A1-3). Notably, the transition gap from zero to one under conditions (B1-3) remains constant, which is independent of $\Delta$ or $\norm{H}$ as in (A1-3). As a result, the complexity involved in approximating the filter function under conditions (B1-3) demonstrates merely a logarithmic dependence on the precision parameter.  

In \cref{fig:approx}, we numerically illustrate the desired filtering function using various polynomial degrees. These polynomials are obtained through a convex-optimization-based method \cite{DongLinTong2022}, representing the numerically optimal polynomial approximations that meet the specified conditions. Notably, we observe that the function adheres closely to the required conditions with minimal error, even at lower degrees. Although the shaded area between intervals with restrictions remains uncontrolled due to the absence of specific constraints, the boundedness condition mitigates concerns about this region in the sequential multi-level applications. Our analysis will further demonstrate that the fuzziness in this middle region does not impact the overall performance of the algorithm.

\begin{figure}[htbp]
    \centering
    \includegraphics[width=.5\textwidth]{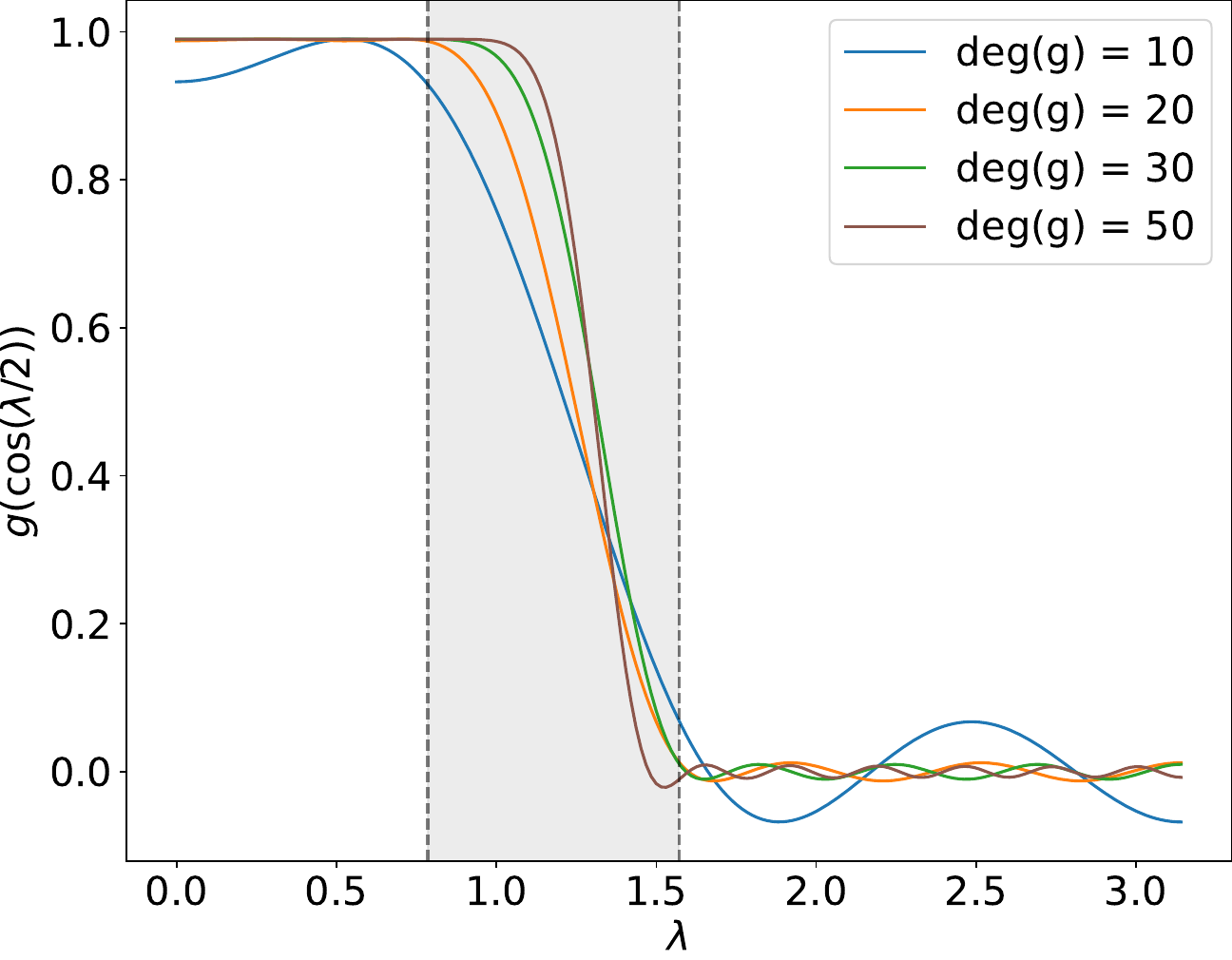}
    \caption{Numerical demonstration of the filter function. The dashed vertical lines stand for the boundary values $\lambda = \pi / 4$ and $\pi / 2$.}
    \label{fig:approx}
\end{figure}

Let us express the initial guess in the eigenbasis of $H$ as:
\begin{equation}
    \ket{\phi} = \sum_k \alpha_k \ket{\psi_k}.
\end{equation}

Then, applying $g(\cos(H t / 2))$ yields:
\begin{equation}
    g(\cos(H t / 2)) \ket{\phi} = \sum_k \alpha_k g(\cos(\lambda_k t / 2)) \ket{\psi_k} =: \sum_k \wt{\alpha}_k \ket{\psi_k} =: \wt{\phi}.
\end{equation}

Here, the ket notation is not applied to $\wt{\phi}$ in order to emphasize that it represents an unnormalized state. Given the properties of $g$, we observe:

\begin{enumerate}
    \item[(1)] For $0 \le \lambda_k < \pi / (4t)$, the coefficient remains almost unchanged: $\wt{\alpha}_k = g(\cos(\lambda_k t / 2)) \alpha_k \approx \alpha_k$.
    \item[(2)] For $\pi / (2t) < \lambda_k \le \pi / t$, the component is nearly eliminated: $\abs{\wt{\alpha}_k} \le \epsilon^\prime$.
    \item[(3)] In all cases, the magnitude of the component is monotonically non-increasing: $\abs{\wt{\alpha}_k} \le \abs{\alpha_k}$.
\end{enumerate}

By recurrently applying $g(\cos(H t_\ell / 2))$ with varying $t_\ell = 2^\ell / \norm{H}$, a sequence of unnormalized states
$$\left( \phi^{(0)}, \phi^{(1)}, \phi^{(2)}, \cdots, \phi^{(L)} \right)$$
is obtained with $\phi^{(0)} := \ket{\phi}$ and $\phi^{(\ell)} := g(\cos(H t_\ell / 2)) \phi^{(\ell - 1)}$. 

As the support of the quantum state $\phi^{(\ell)}$ at the $\ell$-th level is confined to less than $t_\ell \pi / 2$, the spectral radius of the effective Hamiltonian at this level is consequently upper bounded by $\pi \norm{H} / 2^{\ell + 1}$. The stopping criterion ensures that the effective spectral radius at the $L$-th stage is bounded by a constant, while preserving the amplitude of the ground state. Therefore, to achieve this bounded condition for the spectral radius of the effective Hamiltonian, we require:
\begin{equation*}
    \frac{\pi}{2^{L + 1}} \norm{H} \le \pi \Rightarrow L = \lceil \log_2(\norm{H} / 2) \rceil.
\end{equation*}
Furthermore, the condition that the ground-state amplitude is unaffected implies that it suffices to require
\begin{equation*}
    \lambda_0 \le \frac{\pi}{4} \le \frac{\pi}{4 t_L}
\end{equation*}
where $t_L = 2^L / \norm{H} \le 2^{\log_2(\norm{H} / 2) + 1} / \norm{H} = 1$ is used. Note that this condition can be easily achieved by shifting the spectrum of the Hamiltonian by applying $Z$-rotation gates when implementing the time evolution unitary.

Consequently, the cost for this sequential filtering is 
\begin{equation*}
    d \sum_{\ell = 1}^L \lceil t_\ell / \tau \rceil \le d \sum_{\ell = 1}^L (1 + t_\ell / \tau) = d L + d \tau^{-1} \Or(2^L \norm{H}^{-1}) = \Or(d ( L + \tau^{-1})).
\end{equation*}

It is worth noting that $t_\ell \le t_L = 2^L / \norm{H} = 1/ 2$ implies that the cost of sequential filtering is $\Or(d L)$ when $\tau \ge 1/2$ according to the derivation in the equation above.

Although the overall probability of success of the sequential filtering process is the product of the individual conditional probabilities of success at each level, the overall probability remains substantial as the conditional probability at each successive level increases. This intuition is quantified in the analysis that follows. 

Note that in the overlap between the last state and the ground state is 
\begin{equation}
    \abs{\braket{\psi_0 | \phi^{(L)}}} = \abs{\alpha_0 \prod_{j = 1}^L g(\cos(\lambda_0 t_j / 2))} \ge \abs{\alpha_0} \left( 1 - \abs{1 - \prod_{j = 1}^L g(\cos(\lambda_0 t_j / 2))} \right).
\end{equation}
Here, the first expression stands for the inner product between the ground state and the unnormalized state $\phi^{(L)}$. Note that $\abs{\alpha_0} = \gamma$ by definition. The last inequality follows triangle inequality which can be further bounded by telescoping the sequence:
\begin{equation}
    \begin{split}
        \abs{1 - \prod_{j = 1}^L g(\cos(\lambda_0 t_j / 2))} &= \abs{\sum_{k = 1}^L \left(1 - g(\cos(\lambda_0 t_k) / 2)\right) \prod_{j = 1}^{k - 1} g(\cos(\lambda_0 t_j / 2))}\\
        &\le \epsilon^\prime \sum_{k = 1}^L (1 + \epsilon^\prime)^{k - 1} = \left(1 + \epsilon^\prime\right)^L - 1.
    \end{split}
\end{equation}
We can choose $\epsilon^\prime L \le \ln(3/2)$ such that
\begin{equation}
    \left(1 + \epsilon^\prime\right)^L - 1 \le e^{L \epsilon^\prime} - 1 \le \frac{1}{2}.
\end{equation}
Then, we have a lower bound on the overlap 
\begin{equation}
    \abs{\braket{\psi_0 | \phi^{(L)}}} \ge \frac{1}{2} \gamma.
\end{equation}
To achieve the final target fidelity at least $1 - \epsilon$, the error parameter is $\epsilon^\prime = \gamma \epsilon$ where the factor $\gamma$ accounts for the subnormalization. Hence, the requirement is equivalent to $\gamma \epsilon L \le \ln(3/2)$. According to the choice of stopping point $L$, we deduce that it suffices to set $\epsilon \le \gamma^{-1} \log_2^{-1}(\norm{H} / 2) \ln(3 / 2)$ to fulfill the condition.

We then apply a different clean-up filter function to the state $\phi^{(L)}$. According to the previous discussion, the state $\phi^{(L)}$ approximately supports on the subspace with eigenvalue of the Hamiltonian lower than $\pi$. Unlike the sequential filters, this final filter is constructed by querying $e^{- \I H}$ for $\Or(\Delta^{-1} \log(\gamma^{-1} \epsilon^{-1}))$ times to achieve the target fidelity at least $1 - \epsilon$. Note that each $e^{- \I H}$ requires $\lceil \tau^{-1} \rceil$ queries to $O_H$ according to \cref{lma:fast-forward}. Here, the $\gamma$ dependency in logarithmic factor accounts for subnormalization that comes from post-selecting measurement results when analyzing the error. Consequently, the overlap between the ground state and the final unnormalized state is $\Omega(\gamma)$. A single run of this procedure costs
\begin{equation}
    \Or\left( d (L + \tau^{-1}) + \Delta^{-1} \lceil\tau^{-1} \rceil \log\left(\gamma^{-1} \epsilon^{-1}\right) \right) = \Or\left(\left(\log(\norm{H}) + (1 + \tau^{-1}) \Delta^{-1} \right) \log\left(\gamma^{-1} \epsilon^{-1}\right) \right).
\end{equation}
Here, the $\tau$-dependent term in the cost stands for the implementation cost due to the limited cutoff for fast-forwarding. When $\tau \ge 1$, the complexity becomes independent of $\tau$ according to the analysis, which is consistent with the outcome obtained by taking the limit as $\tau$ becomes sufficiently large in the above complexity scaling.

To prepare the ground state with constantly high success probability, one can either repeat experiments or implement an amplitude amplification to coherently boost the success probability. For the former choice, the action of each filter function can be implemented by intermediate measurement and post selection. This implementation particularly saves the use of ancilla qubits due to the reuse of ancilla qubit after measurements. However, an $\Or(1 / \gamma^2)$ number of repetitions is required because of the success probability due to the overlap. Consequently, the overall complexity is summarized in \cref{thm:multi-level}.

\begin{thm}[Multi-level QSP-based ground state preparation using fast-forwarded Hamiltonian evolution]\label{thm:multi-level}
    Under the requirement that $\epsilon \le \gamma^{-1} \log_2^{-1}(\norm{H} / 2) \ln(3 / 2)$ and $\norm{H} \ge \pi$, using multi-level QSP-based method, we can prepare the ground state, with probability $2/3$, up to fidelity $1 - \epsilon$, with the following cost:
    \begin{enumerate}
        \item[(1)] $\Or(\gamma^{-2})$ queries to $O_I$,
        \item[(2)] $\wt{\Or}\left(\gamma^{-2} \left(\log(\norm{H}) + (1 + \tau^{-1}) \Delta^{-1} \right) \log\left(\epsilon^{-1}\right) \right)$ gates,
        \item[(3)] one ancilla qubit.
    \end{enumerate}
\end{thm}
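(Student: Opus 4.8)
The plan is to assemble into one end-to-end accounting the ingredients already built in \cref{sec:multi-qsp}, and then to boost the success probability by plain repetition; repetition, rather than amplitude amplification, is what keeps the ancilla count at one (versus the two needed in \cref{thm:standard-qsp}).

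\emph{One run.} First I would spell out a single execution of the circuit in \cref{fig:multi-level-meas}. After a global $Z$-rotation shift in the time-evolution oracle that places $\lambda_0\le\pi/4$, fix a real even $g$ of degree $d=\Or(\log(1/\epsilon'))$ satisfying (B1)--(B3); set $L=\lceil\log_2(\norm{H}/2)\rceil$ and $t_\ell=2^\ell/\norm{H}$; and for $\ell=1,\dots,L$ use the QETU block of \cite[Theorem 1]{DongLinTong2022} with evolution time $t_\ell$ to apply $g(\cos(Ht_\ell/2))$ to the running state $\phi^{(\ell-1)}$, postselecting the QETU ancilla on $\ket{0}$ after each level; then apply one more clean-up QETU block, a polynomial of $\cos(H/2)$ of degree $\Or(\Delta^{-1}\log((\gamma\epsilon)^{-1}))$ whose transition window is $[\mu-\Delta/2,\mu+\Delta/2]$. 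The structural point, which I would prove by induction on $\ell$ from (B2)--(B3), is that $\phi^{(\ell)}$ is, up to the errors tracked below, supported on $\{\lambda_k<\pi\norm{H}/2^{\ell+1}\}$: on that set $\cos(\lambda_kt_\ell/2)\in(0,1]$ so $g$ acts injectively, components with $\lambda_kt_\ell/2>\pi/4$ are scaled by a factor of magnitude $\le\epsilon'$, and by (B3) no component is amplified. Hence after $L$ levels the effective spectral radius is $\le\pi\norm{H}/2^{L+1}\le\pi$, a constant, which is exactly why the clean-up block needs degree $\Or(\Delta^{-1}\log(\cdot))$ and not $\Or(\norm{H}\Delta^{-1}\log(\cdot))$.

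\emph{Cost and accuracy of one run.} Each $g(\cos(Ht_\ell/2))$ costs $d$ queries to controlled $e^{-\I Ht_\ell}$, and by \cref{lma:fast-forward} each such query costs $\lceil 2^\ell/(\tau\norm{H})\rceil$ queries to $O_H$; summing over $\ell$ and using $2^L/\norm{H}=\Theta(1)$ gives $\Or(d(L+\tau^{-1}))$, and the clean-up block adds $\Or(\lceil\tau^{-1}\rceil\Delta^{-1}\log((\gamma\epsilon)^{-1}))$ queries to $O_H$. With $d=\Or(\log((\gamma\epsilon)^{-1}))$ and $L=\Or(\log\norm{H})$ this reproduces the single-run gate count $\Or\bigl((\log\norm{H}+(1+\tau^{-1})\Delta^{-1})\log((\gamma\epsilon)^{-1})\bigr)$ already derived in the text, with one query to $O_I$ and one ancilla qubit (the QETU ancilla, being measured at every level, is reset for the next block). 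For accuracy I would reuse the telescoping bound $\bigl|1-\prod_{j=1}^Lg(\cos(\lambda_0t_j/2))\bigr|\le(1+\epsilon')^L-1$, which is $\le\tfrac12$ once $\epsilon'L\le\ln(3/2)$; taking $\epsilon'=\gamma\epsilon$ makes this hold exactly under the hypothesis $\epsilon\le\gamma^{-1}\log_2^{-1}(\norm{H}/2)\ln(3/2)$, and keeps the residues of the eliminated components small enough that running the clean-up block with error parameter $\Theta(\gamma\epsilon)$ and conditioning on all the ``success'' measurements outputs an unnormalized state of norm $\le1$ whose overlap with $\ket{\psi_0}$ is $\Omega(\gamma)$ and whose normalization has fidelity $\ge1-\epsilon$ with $\ket{\psi_0}$. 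So one run succeeds with probability $\Omega(\gamma^2)$.

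\emph{Boosting, and the main obstacle.} Repeating the run $\Or(\gamma^{-2})$ times produces a success with probability $\ge2/3$; since each repetition uses one $O_I$ query and the above gate budget and reuses the same ancilla, the totals become $\Or(\gamma^{-2})$ queries to $O_I$, $\Or\bigl(\gamma^{-2}(\log\norm{H}+(1+\tau^{-1})\Delta^{-1})\log((\gamma\epsilon)^{-1})\bigr)$ gates, and one ancilla qubit, and folding the $\log(\gamma^{-1})$ factor hidden in $\log((\gamma\epsilon)^{-1})$ into $\wt\Or$ gives the stated bounds. The cost bookkeeping is mechanical; the genuinely delicate step is the accuracy induction above --- checking that the uncontrolled ``fuzzy'' band of $g$ (where only $|g|\le1$ is guaranteed) neither amplifies any component nor lets the $\Or(\epsilon')$ residue of the supposedly-removed components accumulate, over $L=\Theta(\log\norm{H})$ levels, to more than $\Or(\gamma\epsilon)$ by the time the clean-up filter acts. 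That is precisely what the boundedness condition (B3) together with the telescoping estimate deliver, but it is the one place that needs care, because the clean-up filter only sees the effective Hamiltonian and cannot repair an error already imprinted on the ground-state amplitude.
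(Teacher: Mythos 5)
Your proposal is correct and follows essentially the same route as the paper: the same dyadic sequence of QETU filters with $t_\ell = 2^\ell/\norm{H}$ up to $L=\lceil\log_2(\norm{H}/2)\rceil$, the same telescoping estimate with $\epsilon'=\gamma\epsilon$ yielding the hypothesis on $\epsilon$, the same constant-radius clean-up filter of degree $\Or(\Delta^{-1}\log(\gamma^{-1}\epsilon^{-1}))$, and the same $\Or(\gamma^{-2})$ repetition with intermediate measurement to keep the ancilla count at one. Your explicit induction on the effective support and your flagging of the uncontrolled band of $g$ make precise a point the paper treats only informally, but this is a clarification of the same argument rather than a different proof.
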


For the later choice of coherent implementation, we need to implement a single run of the procedure in a coherent quantum circuit in order to perform amplitude amplification. Note that a naive coherent implementation may directly use deferred measurements and it requires $\Or(L) = \Or(\log(\norm{H}))$ additional ancilla qubits. That makes the algorithm loose its advantage comparing with LCU-based implementation. An improved implementation can be constructed by using the compression gadget in Ref. \cite{LowWiebe2019,FangLinTong2023}. The key idea in this technique is using adder operators to keep track of the success or failure of each filter function. The adder operator implements the modular addition operation on a $m := \lceil \log_2(L + 2) \rceil$ qubit register as $\mathrm{ADD}\ket{i} = \ket{i + 1 \mod 2^m}$. This register serves as a counter. At initialization, the counter register is set to $\ket{L + 1}$ by applying $\mathrm{ADD}^{(L + 1)}$. Note that each level implements a filter function by using QETU, which is
\begin{equation}
    U_{\text{level } \ell} \ket{0}\ket{\phi^{(\ell - 1)}} = \underbrace{\ket{0} g(\cos(H t_\ell / 2)) \ket{\phi^{(\ell - 1)}}}_\text{success} + \underbrace{\ket{1} \ket{\text{garbage}}}_\text{failure}.
\end{equation}
Here, the success stands for the action of the filter function which is marked by the ancilla state $\ket{0}$. On the other hand, when the ancilla state is $\ket{1}$, the quantum state in the system register is unwanted and the case is considered as a failure.

The subsequent conjugated adder operator lowers the counter value once the output of a level is marked as a success by the ancilla state $\ket{0}$. Otherwise, the counter remains unchanged to account the failure. Because the maximal capacity of the counter is $2^m - 1 = 2^{\lceil \log_2(L+2) \rceil} - 1 \ge L + 1$, the counter ancilla state is $\ket{0^m}$ if and only if all levels are marked as success. 

Leveraging the compression gadget, when the spectral radius is large, there is a coherent quantum circuit using $\Or(\log L) = \Or(\log \log (\norm{H}))$ ancilla qubits and with gate complexity $\wt{\Or}((\log(\norm{H}) + (1 + \tau^{-1}) \Delta^{-1}) \log(\gamma^{-1} \epsilon^{-1})$, which generates a unnormalized state whose overlap with the ground state is $\Omega(\gamma)$ and the target fidelity is at least $1 - \epsilon$. Using amplitude amplification on this coherently implemented circuit, the overall complexity of ground state preparation is summarized in \cref{thm:multi-level-aa}. 
\begin{thm}[Multi-level QSP-based ground state preparation using fast-forwarded Hamiltonian evolution and amplitude amplification]\label{thm:multi-level-aa}
    Under the requirement that $\epsilon \le \gamma^{-1} \log_2^{-1}(\norm{H} / 2) \ln(3 / 2)$ and $\norm{H} \ge \pi$, using multi-level QSP-based method and amplitude amplification, we can prepare the ground state, with probability $2/3$, up to fidelity $1 - \epsilon$, with the following cost:
    \begin{enumerate}
        \item[(1)] $\Or(\gamma^{-1})$ queries to $O_I$,
        \item[(2)] $\wt{\Or}\left(\gamma^{-1} \left(\log(\norm{H}) + (1 + \tau^{-1}) \Delta^{-1}  \right) \log\left(\epsilon^{-1}\right) \right)$ gates,
        \item[(3)] $\Or(\log \log (\norm{H}))$ ancilla qubits.
    \end{enumerate}
\end{thm}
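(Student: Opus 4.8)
\textit{Proof proposal.}
The plan is to obtain \cref{thm:multi-level-aa} by combining the single-run analysis already carried out in the paragraphs preceding \cref{thm:multi-level} with two standard ingredients: the compression gadget of Refs.~\cite{LowWiebe2019,FangLinTong2023} for coherent bookkeeping of the post-selection events, and fixed-point (equivalently, exponential-search) amplitude amplification. First I would recall the established facts about one run of the multi-level procedure: with $L=\lceil\log_2(\norm{H}/2)\rceil$ sequential filters $g(\cos(Ht_\ell/2))$, $t_\ell=2^\ell/\norm{H}$, followed by the clean-up filter built from $\Or(\Delta^{-1}\log(\gamma^{-1}\epsilon^{-1}))$ queries to $e^{-\I H}$, the telescoping bound gives $\abs{\braket{\psi_0|\phi^{(L)}}}\ge\tfrac12\gamma$ and, after the clean-up level, an unnormalized output whose overlap with $\ket{\psi_0}$ is $\Omega(\gamma)$ and whose normalization is $\epsilon$-close to $\ket{\psi_0}$, provided $\epsilon^\prime=\gamma\epsilon$ and $(1+\epsilon^\prime)^{L}-1\le\tfrac12$, which is exactly the hypothesis $\epsilon\le\gamma^{-1}\log_2^{-1}(\norm{H}/2)\ln(3/2)$ (together with $\norm{H}\ge\pi$ so that the stopping criterion and $t_L=1/2$ hold). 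The per-run gate count is $\Or\bigl(d\sum_{\ell}\lceil t_\ell/\tau\rceil+\Delta^{-1}\lceil\tau^{-1}\rceil\log(\gamma^{-1}\epsilon^{-1})\bigr)=\Or\bigl((\log\norm{H}+(1+\tau^{-1})\Delta^{-1})\log(\gamma^{-1}\epsilon^{-1})\bigr)$, using $d=\Or(\log(\epsilon^{\prime-1}))$.

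Next I would make this run coherent. Each level $\ell$ is the QETU block $U_{\mathrm{level}\ \ell}\ket{0}\ket{\phi^{(\ell-1)}}=\ket{0}\,g(\cos(Ht_\ell/2))\ket{\phi^{(\ell-1)}}+\ket{1}\ket{\mathrm{garbage}}$ acting on a single reused flag qubit. I conjugate the modular adder $\mathrm{ADD}$ on an $m=\lceil\log_2(L+2)\rceil$-qubit counter register, controlled on the flag being $\ket{0}$, and reset the flag; the counter is initialized to $\ket{L+1}$ (the clean-up filter is treated as the $(L+1)$-th level). Since the counter decrements only on a success and its capacity $2^m-1\ge L+1$ forbids wraparound, the counter ends in $\ket{0^m}$ if and only if all $L+1$ filter applications succeed, so the whole run is a single isometry $V$ with $V\ket{0^m}\ket{0}\ket{\phi}=\ket{0^m}\ket{0}\,\phi^{(\mathrm{final})}+(\text{states orthogonal to }\ket{0^m}\ket{0}\otimes\mathcal H)$, where $\|\phi^{(\mathrm{final})}\|=\Omega(\gamma)$ and its normalization is $\epsilon$-close to $\ket{\psi_0}$. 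The only extra qubits are the one QETU flag and the $m=\Or(\log L)=\Or(\log\log\norm{H})$ counter qubits; the $\Or(L)$ adders contribute only $\Or(L\log L)$ gates, subsumed in $\wt\Or(\log\norm{H})$.

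Finally I would apply amplitude amplification to $V$ with the marked subspace ``counter $=\ket{0^m}$, flag $=\ket{0}$''. The marked amplitude is $\Omega(\gamma)$, so $\Or(\gamma^{-1})$ rounds of fixed-point (or exponential-search) amplitude amplification raise the success probability to $2/3$; each round uses one application each of $V$, $V^\dagger$, a reflection about the marked subspace (one additional reflection ancilla), and a reflection about the initial state implemented with one call to $O_I$ and one to $O_I^\dagger$. Multiplying the per-run cost by $\Or(\gamma^{-1})$ and absorbing $\log\gamma^{-1}$ into the $\wt\Or$ yields gate complexity $\wt\Or\bigl(\gamma^{-1}(\log\norm{H}+(1+\tau^{-1})\Delta^{-1})\log(\epsilon^{-1})\bigr)$ and $\Or(\gamma^{-1})$ queries to $O_I$, while the ancilla count stays at $\Or(\log\log\norm{H})$, giving all three items of the theorem.

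\textbf{Main obstacle.} The delicate step is the correctness and resource accounting of the compression-gadget implementation: one must verify that a single reused QETU flag qubit together with an $\Or(\log L)$-qubit counter faithfully realizes the $(L+1)$-fold post-selection coherently --- no spurious ``success'' branches, no counter overflow, and all failure/garbage branches remaining orthogonal to the marked subspace so they cannot inflate or corrupt the amplified amplitude --- and that the resulting isometry $V$ has exactly the marked amplitude $\Omega(\gamma)$ and fidelity $1-\epsilon$ produced by the measurement-based analysis. Once that equivalence is in hand, the amplitude-amplification step and the arithmetic on the cost are routine.
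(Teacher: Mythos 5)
Your proposal is correct and follows essentially the same route as the paper: the single-run cost and overlap analysis from the measurement-based version, the compression gadget with an $m=\lceil\log_2(L+2)\rceil$-qubit counter to defer the $(L+1)$-fold post-selection coherently, and amplitude amplification on the resulting $\Omega(\gamma)$ marked amplitude to obtain the $\Or(\gamma^{-1})$ scaling. The details you flag as the main obstacle (no counter overflow since $2^m-1\ge L+1$, failure branches orthogonal to the marked subspace) are exactly the points the paper relies on in its compression-gadget discussion.
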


According to the result mentioned above, the gate complexity for any $\tau$ is $\wt{\Or}(\log(\norm{H}) + \Delta^{-1})$, illustrating an additive relation between the dependencies on the spectral gap and the logarithmic spectral radius. This formulation represents an improvement over the multiplicative term $\log(\norm{H})\Delta^{-1}$ observed in the gate complexity of the \selector oracle in the LCU-based method, as specified in \cref{thm:lcu-prepare}. Additionally, the number of required ancilla qubits in our multi-level QSP-based method shows an exponential improvement in terms of the $\norm{H}$dependency and notably lacks the undesirable dependencies on $\Delta$ and $\epsilon$ that are present in the LCU-based method's results, as detailed in \cref{thm:lcu-prepare}. 

One may think about designing a multi-level LCU-based algorithm for preparing the ground state in which each filter function is implemented by LCU. However, it is worth noting that the decay of the probability of success in each LCU-based level will eventually ruin the overall complexity due to the small accumulated overall probability of success.

\section{Ground state preparation methods using soft fast-forwarded Hamiltonian evolution}\label{sec:alpha-soft}
In the previous section, we discussed the algorithm and its complexity analysis with the application of $\tau$-cutoff fast-forwarded Hamiltonian evolution. When transitioning to $\alpha$-soft fast-forwarded Hamiltonian evolution, although the core concept and idea are the same, there is difference in the complexity analysis. This section will delve into a comprehensive analysis of the complexity that arise when changing the fast-forwarding model.

\subsection{LCU-based method}\label{sec:alpha-soft-lcu}
Because the \prepor oracle does not depend on the implementation of the Hamiltonian evolution oracle, the difference in the complexity analysis stems from that in the implementations of \selector oracle. When employing the $\alpha$-soft fast-forwarded Hamiltonian evolution, the cost associated with implementing the \selector oracle is as follows:
\begin{equation}
    2 \sum_{\ell = 0}^{\lceil \log_2(2d+1)\rceil} (2^\ell)^\alpha = \left\{
    \begin{array}{ll}
        \Or(\log(d)) & \text{, when } \alpha = 0, \\
        \Or(d^\alpha) & \text{, when } 0 < \alpha \le 1.
    \end{array}
    \right.
\end{equation}

When the amplitude amplification procedure is used, there is an additional multiplicative factor of $\Or(\gamma^{-1} \log(d))$ in the circuit depth, where the logarithmic term is due to the $1$-norm scaling $\norm{c}_1 = \Or(\log(d))$ in \cref{thm:Fourier-approx}. Therefore the total complexity of the ground state preparation for implementing the \selector oracle (with amplitude amplification) is:
\begin{enumerate}
    \item When $\alpha = 0$, the complexity is $\wt{\Or}(\gamma^{-1}\log^2(\norm{H} \Delta^{-1} \log(\epsilon^{-1})))$.
    \item When $0 < \alpha \le 1$, the complexity is $\Or(\gamma^{-1} d^\alpha \log(d)) = \wt{\Or}(\gamma^{-1} \norm{H}^\alpha \Delta^{-\alpha} \log^\alpha(\epsilon^{-1}))$.
\end{enumerate}

Integrating this with the analysis in \cref{sec:lcu}, we encapsulate the foregoing analysis into the subsequent theorem:
\begin{thm}[LCU-based ground state preparation using $\alpha$-soft fast-forwarded Hamiltonian evolution]\label{thm:alpha-soft-lcu}
    Using LCU-based method, we can prepare the ground state, with probability $2/3$, up to fidelity $1 - \epsilon$, with the following cost:
    \begin{enumerate}
        \item[(1)] $\Or(\gamma^{-1})$ queries to $O_I$,
        \item[(2a)] $\wt{\Or}(\gamma^{-1}\log^2(\norm{H} \Delta^{-1} \log(\epsilon^{-1})))$ gates to implement \selector oracle  when $\alpha = 0$,
        \item[(2b)] $\wt{\Or}(\gamma^{-1} \norm{H}^\alpha \Delta^{-\alpha} \log^\alpha(\epsilon^{-1}))$ gates to implement \selector oracle when $0 < \alpha \le 1$,
        \item[(3)] $\wt{\Or}(\gamma^{-1} \norm{H} \Delta^{-1} \log(\epsilon^{-1}))$ additional T gates to implement \prepor oracle,
        \item[(4)] $\wt{\Or}(\log(\norm{H} \Delta^{-1} \gamma^{-1} \epsilon^{-1}))$ ancilla qubits.
    \end{enumerate}
\end{thm}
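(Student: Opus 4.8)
The plan is to assemble \cref{thm:alpha-soft-lcu} from the structural decomposition already established in \cref{sec:lcu} together with the revised \selector-cost computation that opens this section. The queries to $O_I$ in item~(1) and the \prepor-oracle costs in items~(3) and~(4) are literally unchanged from \cref{thm:lcu-prepare}, since, as noted, the \prepor construction is agnostic to how the Hamiltonian evolution is realized; so the only work is items~(2a) and~(2b).

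First I would recall from \cref{thm:Fourier-approx} that the filter function $f$ is a Fourier series supported on frequencies $t_k = k/\norm{H}$ with $\abs{k}\le d$ and $d = \Or(\norm{H}\Delta^{-1}\log(\wt\epsilon^{-1}))$, and that implementing $\norm{c}_1^{-1} f(H)$ via LCU reduces to a \selector over the controlled evolutions $\{e^{\pm\I 2^\ell H/\norm{H}}\}_{\ell=0}^{\lceil\log_2(2d+1)\rceil}$. In the $\alpha$-soft model each such evolution, having $\abs{t}\cdot\norm{H} = 2^\ell \le \tau\norm{H}$, costs a single query to $O_H$ with gate complexity $\Or((2^\ell)^\alpha)$ by \cref{def:tau-cutoff-fast-forwarding}. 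Summing the geometric-type series $2\sum_{\ell=0}^{\lceil\log_2(2d+1)\rceil}(2^\ell)^\alpha$ gives $\Or(\log d)$ when $\alpha=0$ and $\Or(d^\alpha)$ when $0<\alpha\le1$, exactly the displayed dichotomy. Then I would multiply by the amplitude-amplification overhead $\Or(\norm{c}_1\gamma^{-1}) = \Or(\gamma^{-1}\log d)$, using the $1$-norm bound from \cref{thm:Fourier-approx}, and substitute $d = \Or(\norm{H}\Delta^{-1}\log(\wt\epsilon^{-1}))$ with $\wt\epsilon = \norm{c}_1^{-1}\gamma\epsilon$; the $\log d$ and $\log(\wt\epsilon^{-1})$ factors collapse into $\wt\Or(\cdot)$, yielding $\wt\Or(\gamma^{-1}\log^2(\norm{H}\Delta^{-1}\log(\epsilon^{-1})))$ in case (2a) and $\wt\Or(\gamma^{-1}\norm{H}^\alpha\Delta^{-\alpha}\log^\alpha(\epsilon^{-1}))$ in case (2b). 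Finally I would invoke \cref{thm:lcu-prepare} verbatim for items (1), (3), (4), noting that the fidelity $1-\epsilon$ and success probability $2/3$ are guaranteed by the same amplitude-amplification argument as before.

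I do not expect a genuine obstacle here: the theorem is essentially bookkeeping that repackages \cref{thm:lcu-prepare} with one arithmetic substitution. The only point requiring mild care is tracking the adjusted precision $\wt\epsilon = \norm{c}_1^{-1}\gamma\epsilon$ through both the polynomial degree $d$ and the $\log^\alpha$ factor in case (2b), so that the hidden logarithmic factors absorbed by $\wt\Or$ are honestly the same ones hidden in \cref{thm:lcu-prepare}; I would state this once rather than re-deriving it. A secondary check is that in the $\alpha$-soft regime the cutoff $\tau$ is assumed large enough that every requested evolution time $2^\ell/\norm{H}\le\tau$, so \cref{lma:fast-forward} always returns a single query — this is part of the definition of the regime and needs only a one-line remark.
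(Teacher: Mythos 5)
Your proposal is correct and follows essentially the same route as the paper: the paper's Section on the $\alpha$-soft LCU method likewise observes that only the \selector cost changes, evaluates the same sum $2\sum_{\ell=0}^{\lceil\log_2(2d+1)\rceil}(2^\ell)^\alpha$ to get the $\Or(\log d)$ versus $\Or(d^\alpha)$ dichotomy, multiplies by the $\Or(\norm{c}_1\gamma^{-1})=\Or(\gamma^{-1}\log d)$ amplitude-amplification overhead, and imports items (1), (3), (4) verbatim from \cref{thm:lcu-prepare}. Your added remarks on tracking $\wt\epsilon=\norm{c}_1^{-1}\gamma\epsilon$ and on the cutoff being large enough that each evolution is a single query are consistent with the paper's setup and require no change.
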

It is worth noting that though the implementation cost of \selector oracle is improved by using $\alpha$-soft fast-forwarded Hamiltonian evolution and the required number of ancilla qubits depends weakly in the spectral radius in logarithm $\log(\norm{H})$, the implementation cost of \prepor oracle depends linearly in $\norm{H}$, which dominates the overall cost. We provide additional discussion of the complexity result (2b) as follows. For any $\alpha > 0$, the term $\norm{H}^\alpha \Delta^{- \alpha} \log^\alpha(\epsilon^{-1})$ is always faster than $\text{poly}\log(\norm{H}\Delta^{-1}\log(\epsilon^{-1}))$. As a consequence, it justifies the transition from the gate complexity in (2b) to that in (2a) as $\alpha \to 0^+$. Meanwhile, the ideal fast-forwarding can also be realized by setting $\tau$ to be larger than the intended simulation time. Under this circumstance, the gate complexity in \cref{thm:lcu-prepare} matches that in \cref{thm:alpha-soft-lcu} (2a).

When $\alpha \to 1$, the cost of Hamiltonian evolution becoming linear with time indicates the absence of fast-forwarding. This scenario is similar to setting the cutoff $\tau = \norm{H}^{-1}$ to be small in \cref{def:tau-cutoff-fast-forwarding}, which stands for small step size to construct long-time evolution. According to \cref{lma:fast-forward}, this small step size implies that the gate complexity synthesizing $e^{-\I t H}$ is $\Or(t \tau^{-1}) = \Or(t \norm{H})$ which is consistent with the scenario when $\alpha = 1$. Consequently, under this premise, the gate complexity results in the proceeding theorem align with that presented in \cref{thm:lcu-prepare}.

\subsection{Standard QSP-based method}\label{sec:alpha-soft-standard-qsp}
To employ the standard QSP-based method, as outlined in \cref{sec:standard-qsp}, the Hamiltonian is normalized to $H / \norm{H}$ for execution. This normalization results in the number of queries to $e^{-\I H / \norm{H}}$ be $\Or(\norm{H} \Delta^{-1} \log(\gamma^{-1} \epsilon^{-1}))$ due to the shrinkage in the relative spectral gap. By considering the implementation cost of the time evolution as specified in \cref{def:tau-cutoff-fast-forwarding}, we compile the cost associated with preparing the ground state using the standard QSP-based method and amplitude amplification as follows.

\begin{thm}[QSP-based ground state preparation]\label{thm:alpha-soft-standard-qsp}
    Using a QSP-based method, we can prepare the ground state, with probability $2/3$, up to fidelity $1 - \epsilon$, with the following cost:
    \begin{enumerate}
        \item[(1)] $\Or(\gamma^{-1})$ queries to $O_I$,
        \item[(2)] $\wt{\Or}(\gamma^{-1} \norm{H} \Delta^{-1} \log(\epsilon^{-1}))$ gates,
        \item[(3)] two ancilla qubits.
    \end{enumerate}
\end{thm}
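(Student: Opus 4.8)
The plan is to combine the polynomial-approximation input from the standard QSP/QETU machinery with the cost accounting for synthesizing short-time Hamiltonian evolution under the $\alpha$-soft fast-forwarding model. First I would recall from \cref{sec:standard-qsp} that the standard QSP-based method (QETU of Ref.~\cite{DongLinTong2022}) prepares the ground state by applying a degree-$D$ even polynomial of $\cos(H/(2\norm{H}))$ to the normalized Hamiltonian $H/\norm{H}$, where the polynomial plays the role of the filter function separating $\lambda_0$ from $[\lambda_1,\norm{H}]$. Because the relative spectral gap is $\Delta/\norm{H}$, the degree needed for a filter with target accuracy $\epsilon$ (after accounting for the $\gamma^{-1}$ subnormalization, which enters only logarithmically) is $D=\Or(\norm{H}\Delta^{-1}\log(\gamma^{-1}\epsilon^{-1}))=\wt\Or(\norm{H}\Delta^{-1}\log(\epsilon^{-1}))$, and that the QETU circuit uses exactly $D$ (controlled) applications of $e^{-\I H/\norm{H}}$ together with $D$ single-qubit $X$-rotations and two ancilla qubits (one for QETU, one for amplitude amplification). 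This is the content already extracted in \cref{thm:standard-qsp}; the only thing that changes in the $\alpha$-soft model is how much a single query $e^{-\I H/\norm{H}}$ costs.

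Next I would invoke \cref{def:tau-cutoff-fast-forwarding} in the $\alpha$-soft regime: here $\tau$ is assumed larger than any evolution time used in the algorithm (the longest being $1/\norm{H}$, which is $\le\tau$ since $\tau$ is independent of $\norm{H}$ and can be taken $\ge 1$), so each $e^{-\I H/\norm{H}}$ requires exactly one query to $O_H$ and costs $\Or(C(n)+(\,|1/\norm{H}|\cdot\norm{H})^\alpha)=\Or(1)$ gates once the additive $C(n)$ term is suppressed as per the convention in \cref{sec:problem}. Thus, remarkably, for every $0\le\alpha\le1$ a single evolution step costs $\Or(1)$ — unlike the $\tau$-cutoff case where a small $\tau=\norm{H}^{-1}$ would force $\lceil (1/\norm{H})/\tau\rceil$ to be as large as desired. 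Multiplying the $\wt\Or(\norm{H}\Delta^{-1}\log(\epsilon^{-1}))$ queries by the $\Or(1)$ per-query cost, and including the $\Or(\gamma^{-1})$ amplitude-amplification overhead (from the initial overlap $\gamma$ with the ground state, which must be boosted to constant success probability, matching the $\Or(\gamma^{-1})$ queries to $O_I$), gives the claimed gate complexity $\wt\Or(\gamma^{-1}\norm{H}\Delta^{-1}\log(\epsilon^{-1}))$; the two-ancilla count and the $\Or(\gamma^{-1})$ queries to $O_I$ are unchanged from \cref{thm:standard-qsp}.

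Concretely, the steps in order are: (i) quote the polynomial construction and QETU complexity from \cite[Theorem~11]{DongLinTong2022}, i.e.\ \cref{thm:standard-qsp}, to fix the degree $D=\wt\Or(\norm{H}\Delta^{-1}\log(\epsilon^{-1}))$ and the circuit structure; (ii) observe that in the $\alpha$-soft model each of the $D$ controlled evolutions $e^{-\I H/\norm{H}}$ is one query to $O_H$ at gate cost $\Or((\norm{H}\cdot\norm{H}^{-1})^\alpha)=\Or(1)$, independent of $\alpha$; (iii) multiply by the amplitude-amplification factor $\Or(\gamma^{-1})$ and collect the additive error $D\delta$ from imperfect evolutions into the $\wt\Or$ (deferring the precise $\delta$ bookkeeping to \cref{sec:HE-error}); (iv) read off the resulting $\Or(\gamma^{-1})$ queries to $O_I$, $\wt\Or(\gamma^{-1}\norm{H}\Delta^{-1}\log(\epsilon^{-1}))$ gates, and two ancilla qubits. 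The main (mild) obstacle is purely a matter of exposition rather than mathematics: making sure the $\alpha$-independence of the per-step cost is stated cleanly — since $e^{-\I H/\norm{H}}$ involves evolution time $1/\norm{H}$, the product $|t|\cdot\norm{H}=1$ is $\alpha$-independent, which is exactly why the QSP-based method cannot benefit from soft fast-forwarding the way LCU's \selector oracle (with its long evolution times $2^\ell/\norm{H}$) can. I would note explicitly that this is why the standard-QSP row of \cref{tab:compare_algs_state} coincides across cases 1--3, and that the bound matches \cref{thm:standard-qsp} under the $\norm{H}^{-1}\le\tau$ assumption made there, tying the two fast-forwarding models together.
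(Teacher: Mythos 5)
Your proposal is correct and follows essentially the same route as the paper: quote the degree bound $D=\wt\Or(\norm{H}\Delta^{-1}\log(\epsilon^{-1}))$ from the standard QETU analysis (\cref{thm:standard-qsp}), observe that each building block $e^{-\I H/\norm{H}}$ has $\abs{t}\cdot\norm{H}=1$ so its cost is $\Or(1)$ for every $\alpha$ under the $\alpha$-soft model, and multiply by the $\Or(\gamma^{-1})$ amplitude-amplification overhead. Your explicit remark that the per-step cost is $\alpha$-independent because the dimensionless product $\abs{t}\cdot\norm{H}$ equals one is exactly the point the paper makes more loosely ("the building block involves only short-time evolution, the cost of which remains largely unaffected by fast-forwarding"), so no gap remains.
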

The gate complexity of the standard QSP-based method sees no advantage from utilizing $\alpha$-soft fast-forwarded Hamiltonian evolution. This stems from the fact that the building block of this method involves only short-time evolution, the cost of which remains largely unaffected by fast-forwarding capabilities.

For any other $0 \le \alpha < 1$, the gate complexity in the implementation cost of the \selector oracle in LCU-based method (see \cref{thm:alpha-soft-lcu}) is found to be less than the gate complexity for the standard QSP-based method as outlined in \cref{thm:alpha-soft-standard-qsp}. Given the concave nature of fast-forwarded Hamiltonian evolution's gate complexity with respect to evolution time, exemplified by $t_1^\alpha + t_2^\alpha \ge (t_1 + t_2)^\alpha$, there is an incentive to employ longer-time Hamiltonian evolution in a layered approach to minimize overall gate complexity. This insight motivates the development of the multi-level QSP-based method, which capitalizes on the \selector oracle's functionality.

\subsection{Multi-level QSP-based method}\label{sec:alpha_soft_multi_level_qsp}
In line with the approach detailed in \cref{sec:multi-qsp}, the multi-level QSP-based method aims to reduce the effective spectral radius of the Hamiltonian through a sequential, multi-level application of QSP-based filters. In contrast to the case of small cutoff time $\tau$, the case with sufficiently large cutoff time facilitates a more significant reduction in the effective spectral radius, achieved through higher resolution in the sequential filtering process. To be precise, the sequential filtering process can continue until the exact ground state is prepared, eliminating the need for an additional filter function at the end. As we will demonstrate in the analysis in this subsection, the dependency on the spectral gap manifests logarithmically rather than as an additive term outside the logarithm in the case with small cutoff time. 

Similar to the discussion in \cref{sec:multi-qsp}, a sequence of unnormalized states can be obtained by recurrently applying $g(\cos(H t_\ell / 2))$ with varying evolution time $t_\ell = 2^\ell / \norm{H}$: 
$$\left( \phi^{(0)}, \phi^{(1)}, \phi^{(2)}, \cdots, \phi^{(\wt{L})} \right)$$
where $\phi^{(0)} := \ket{\phi}$ and $\phi^{(\ell)} := g(\cos(H t_\ell / 2)) \phi^{(\ell - 1)}$. The stopping point $\wt{L} = \lceil \log_2(\pi \norm{H} / (4 \Delta)) \rceil$ is chosen to eliminate all excited-state components at the final level. Note that the evolution time at the final level is bounded as $\pi / (4 \Delta) \le t_{\wt{L}} \le \pi / (2 \Delta)$.
The spectral radius of the effective Hamiltonian at the $\wt{L}$-th level is reduced to
\begin{equation*}
    \norm{H_\text{eff}} \le \frac{\pi}{2 t_{\wt{L}}} \le 2 \Delta
\end{equation*}
which is smaller than the $\Or(1)$ effective spectral radius thanks to the large cutoff time.

Meanwhile, the condition that the ground-state component is almost unchanged requires 
\begin{equation*}
    \lambda_0 \le \frac{\Delta}{2} \le \frac{\pi}{4 t_{\wt{L}}}.
\end{equation*}
Note that this condition can be easily achieved by shifting the spectrum of the Hamiltonian by applying $Z$-rotation gates.

The cost of the sequential filtering is 
\begin{equation*}
    \sum_{\ell = 1}^{\wt{L}} \left(2^\ell \right)^\alpha \Or(\log(\gamma^{-1} \epsilon^{-1})) = \left\{
    \begin{array}{ll}
        \Or(\log(\norm{H} \Delta^{-1}) \log(\gamma^{-1} \epsilon^{-1})) & \text{, when } \alpha = 0, \\
        \Or(\norm{H}^\alpha \Delta^{- \alpha} \log(\gamma^{-1} \epsilon^{-1})) & \text{, when } 0 < \alpha \le 1.
    \end{array}
    \right.
\end{equation*}
This improves the linear dependency of $\norm{H}$ in the standard QSP-based method thanks to the reduction of the effective spectral radius. 

The requirement $\abs{\braket{\psi_0 | \phi^{(\wt{L})}}} \ge \gamma / 2$ imposes an additional condition $\gamma \epsilon \wt{L} \le \ln(3 / 2)$ according to the analysis in \cref{sec:multi-qsp}, which is equivalent to $\epsilon \le \gamma^{-1} \log_2^{-1}(\pi \norm{H} / (4 \Delta)) \ln(3/2)$. 

When intermediate measurements are used, there is a multiplicative factor $\Or(\gamma^{-2})$ contributing to the overall gate complexity due to the repetition of quantum circuits. The cost is summarized below.
\begin{thm}[Multi-level QSP-based ground state preparation using $\alpha$-soft fast-forwarded Hamiltonian evolution]\label{thm:alpha-soft-multi-level}
    Under the requirement that $\epsilon \le \gamma^{-1} \log_2^{-1}(\pi \norm{H} / (4 \Delta)) \ln(3/2)$, using multi-level QSP-based method, we can prepare the ground state, with probability $2/3$, up to fidelity $1 - \epsilon$, with the following cost:
    \begin{enumerate}
        \item[(1)] $\Or(\gamma^{-2})$ queries to $O_I$,
        \item[(2a)] $\wt{\Or}(\gamma^{-2}\log(\norm{H} \Delta^{-1}) \log(\epsilon^{-1}))$ gates when $\alpha = 0$,
        \item[(2a)] $\wt{\Or}(\gamma^{-2}\norm{H}^\alpha \Delta^{-\alpha} \log(\epsilon^{-1}))$ gates when $0 < \alpha \le 1$,
        \item[(3)] one ancilla qubit.
    \end{enumerate}
\end{thm}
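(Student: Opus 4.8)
The plan is to follow the same template as the $\tau$-cutoff analysis in \cref{sec:multi-qsp}, but with the stopping point extended all the way down to the spectral gap, so that no separate clean-up filter is needed. First I would set up the sequence of unnormalized states $\phi^{(\ell)} = g(\cos(H t_\ell/2))\phi^{(\ell-1)}$ with $t_\ell = 2^\ell/\norm{H}$ and $\wt{L} = \lceil \log_2(\pi\norm{H}/(4\Delta)) \rceil$. Using the same argument as in \cref{sec:multi-qsp} — namely that after applying the filter at level $\ell$ the state is (approximately) supported on eigenvalues below $\pi/(2 t_\ell)$ — I would verify that at the final level the effective spectral radius is at most $2\Delta$, hence all excited states (which lie at energy $\ge \mu + \Delta/2 \ge \lambda_0 + \Delta \ge \Delta/2 + \Delta > \pi/(4t_{\wt L})$ after the appropriate $Z$-shift) fall into the ``eliminated'' region (B2), while the ground state at energy $\le \Delta/2$ falls into the ``preserved'' region (B1). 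The shift of the spectrum by $Z$-rotations is free and does not affect the cost, so this step is essentially bookkeeping.

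Next I would carry over the telescoping estimate from \cref{sec:multi-qsp} verbatim: the overlap of $\phi^{(\wt L)}$ with $\ket{\psi_0}$ is $|\alpha_0 \prod_{j=1}^{\wt L} g(\cos(\lambda_0 t_j/2))| \ge \gamma((1+\epsilon')^{\wt L} - 1)$-corrected, and choosing $\epsilon' \wt L \le \ln(3/2)$ gives overlap $\ge \gamma/2$. Tracking the subnormalization exactly as before, this converts to the requirement $\gamma\epsilon\wt L \le \ln(3/2)$, i.e.\ $\epsilon \le \gamma^{-1}\log_2^{-1}(\pi\norm{H}/(4\Delta))\ln(3/2)$, which is the hypothesis of the theorem. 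Since each filter $g$ has degree $d = \Or(\log(1/\epsilon')) = \Or(\log(\gamma^{-1}\epsilon^{-1}))$ (conditions (B1)–(B3), built via \cite{WanBertaCampbell2022}), and implementing $g(\cos(H t_\ell/2))$ via QETU costs $d$ controlled queries to $e^{-\I H t_\ell}$, each of which under $\alpha$-soft fast-forwarding costs $\Or((t_\ell\norm{H})^\alpha) = \Or((2^\ell)^\alpha)$ gates (large $\tau$, so one query suffices), the total gate cost of a single coherent run is $\sum_{\ell=1}^{\wt L} (2^\ell)^\alpha \cdot \Or(\log(\gamma^{-1}\epsilon^{-1}))$, which evaluates to $\Or(\log(\norm{H}\Delta^{-1})\log(\gamma^{-1}\epsilon^{-1}))$ when $\alpha=0$ (arithmetic sum of $\wt L$ terms) and $\Or(\norm{H}^\alpha\Delta^{-\alpha}\log(\gamma^{-1}\epsilon^{-1}))$ when $\alpha>0$ (geometric sum dominated by the last term $(2^{\wt L})^\alpha$). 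The single ancilla qubit is the QETU ancilla, reused across levels via intermediate measurement and post-selection.

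Finally, since intermediate measurement/post-selection succeeds only with probability $\Omega(\gamma^2)$ (the squared overlap), I would multiply the single-run cost and the $O_I$ query count by $\Or(\gamma^{-2})$ repetitions to boost the success probability to $2/3$, yielding exactly the stated bounds (2a)/(2b) and item (1), with the $\log(\gamma^{-1}\epsilon^{-1})$ absorbed into $\log(\epsilon^{-1})$ under $\wt\Or$.

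The main obstacle I anticipate is not any single hard estimate but rather making the ``effective spectral radius'' argument fully rigorous: the intermediate states $\phi^{(\ell)}$ are not exactly supported below $\pi/(2t_\ell)$ but only up to an $\ell\epsilon'$-sized leakage from the uncontrolled ``fuzzy'' region between (B1) and (B2), so one must check that this leakage, when fed into the next level's filter, does not accumulate uncontrollably. The boundedness condition (B3) is what saves this — each subsequent filter is a contraction on every eigencomponent — and the telescoping bound $(1+\epsilon')^{\wt L}-1$ already quantifies the worst-case accumulation, so the argument from \cref{sec:multi-qsp} applies without change; the key point to emphasize is that the choice $\wt L = \Or(\log(\norm{H}/\Delta))$ keeps $\epsilon'\wt L$ controlled, which is exactly where the $\epsilon$-hypothesis enters and why the dependence on $\Delta$ appears only logarithmically, in contrast to the additive $\Delta^{-1}$ term in the small-$\tau$ case of \cref{thm:multi-level}.
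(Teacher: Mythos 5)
Your proposal follows essentially the same route as the paper's own argument in \cref{sec:alpha_soft_multi_level_qsp}: the same dyadic sequence of filters $g(\cos(H t_\ell/2))$ with $t_\ell = 2^\ell/\norm{H}$, the same extended stopping point $\wt{L} = \lceil \log_2(\pi\norm{H}/(4\Delta))\rceil$ replacing the clean-up filter, the same telescoping bound yielding the hypothesis $\gamma\epsilon\wt{L}\le\ln(3/2)$, and the same evaluation of $\sum_{\ell=1}^{\wt L}(2^\ell)^\alpha d$ as an arithmetic sum for $\alpha=0$ and a geometric sum dominated by $(2^{\wt L})^\alpha$ for $\alpha>0$, followed by the $\Or(\gamma^{-2})$ repetition factor from post-selection. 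The one caveat (which the paper's own presentation shares) is that the final-level bookkeeping only places $\lambda_1$ above the preserved threshold $\pi/(4t_{\wt L})$ rather than strictly inside the eliminated region $\lambda>\pi/(2t_{\wt L})$, so an excited eigenvalue could land in the fuzzy band; this affects only constants (e.g.\ one extra level), not the stated asymptotics.
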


The dependence in the initial overlap $\gamma$ can be further improved by using amplitude amplification and compression gadget (see \cref{sec:multi-qsp} for details). The tradeoff is the number of ancilla qubits is not a constant in order to perform the multi-level quantum circuit coherently. The cost is summarized below.

\begin{thm}[Multi-level QSP-based ground state preparation using $\alpha$-soft fast-forwarded Hamiltonian evolution and amplitude amplification]\label{thm:alpha-soft-multi-level-aa}
    Under the requirement that $\epsilon \le \gamma^{-1} \log_2^{-1}(\pi \norm{H} / (4 \Delta)) \ln(3/2)$, using multi-level QSP-based method, we can prepare the ground state, with probability $2/3$, up to fidelity $1 - \epsilon$, with the following cost:
    \begin{enumerate}
        \item[(1)] $\Or(\gamma^{-1})$ queries to $O_I$,
        \item[(2a)] $\wt{\Or}(\gamma^{-1}\log(\norm{H} \Delta^{-1}) \log(\epsilon^{-1}))$ gates when $\alpha = 0$,
        \item[(2a)] $\wt{\Or}(\gamma^{-1} \norm{H}^\alpha \Delta^{-\alpha} \log(\epsilon^{-1}))$ gates when $0 < \alpha \le 1$,
        \item[(3)] $\Or(\log\log(\norm{H} \Delta^{-1}))$ ancilla qubits.
    \end{enumerate}
\end{thm}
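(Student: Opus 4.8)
The plan is to combine the single-run analysis already carried out in this subsection (culminating in \cref{thm:alpha-soft-multi-level}) with the coherent-circuit and amplitude-amplification machinery developed in \cref{sec:multi-qsp} for the $\tau$-cutoff case (\cref{thm:multi-level-aa}), adapting only the stopping point and the per-level cost to the $\alpha$-soft model. First I would record that the sequential filtering with $\wt{L} = \lceil \log_2(\pi \norm{H}/(4\Delta))\rceil$ levels, each implementing $g(\cos(H t_\ell/2))$ via a degree-$d = \Or(\log(1/\epsilon'))$ QETU circuit with $t_\ell = 2^\ell/\norm{H}$, produces an unnormalized state $\phi^{(\wt{L})}$ with $\abs{\braket{\psi_0 | \phi^{(\wt{L})}}} \ge \gamma/2$ and post-selected fidelity at least $1-\epsilon$, provided $\epsilon' = \gamma\epsilon$ and $\gamma\epsilon\wt{L} \le \ln(3/2)$; this is exactly the telescoping estimate from \cref{sec:multi-qsp}, with the excited-state components fully removed at the final level because $t_{\wt{L}} \ge \pi/(4\Delta)$. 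The single-run gate cost was shown above to be $\wt{\Or}(\log(\norm{H}\Delta^{-1})\log(\epsilon^{-1}))$ for $\alpha = 0$ and $\wt{\Or}(\norm{H}^\alpha\Delta^{-\alpha}\log(\epsilon^{-1}))$ for $0 < \alpha \le 1$.

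Second, I would make the run coherent with the compression gadget so that amplitude amplification applies without inflating the ancilla count. Introduce an $m := \lceil \log_2(\wt{L}+2)\rceil$-qubit counter register, initialize it to $\ket{\wt{L}+1}$ via $\mathrm{ADD}^{(\wt{L}+1)}$, and at each level $\ell$ apply $U_{\text{level }\ell}$ (which flags success by the QETU ancilla being $\ket{0}$) followed by a conjugated adder that decrements the counter controlled on that ancilla. Since $2^m - 1 \ge \wt{L}+1$, the counter reaches $\ket{0^m}$ if and only if all $\wt{L}$ levels succeed, so the modular arithmetic never wraps and every failure branch carries a nonzero counter value and is hence orthogonal to the projected subspace. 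Projecting the counter onto $\ket{0^m}$ (together with the QETU ancilla onto $\ket{0}$) therefore extracts precisely $\phi^{(\wt{L})}$, yielding a unitary $W$ (the composite circuit) and a projector $\Pi$ such that $\norm{\Pi W \ket{0}\ket{\phi}} = \Omega(\gamma)$ while the normalized state in the image of $\Pi$ has fidelity $\ge 1-\epsilon$ with $\ket{\psi_0}$ — the standard input for amplitude amplification.

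Third, I would run $\Or(\gamma^{-1})$ rounds of amplitude amplification on $W$, boosting the success probability to $\ge 2/3$ using $\Or(\gamma^{-1})$ queries to $O_I$ and $\Or(\gamma^{-1})$ applications of $W$ and $W^\dagger$; since amplification only rescales the component already lying in the image of $\Pi$, the output fidelity remains $\ge 1-\epsilon$. Multiplying the single-run cost by $\Or(\gamma^{-1})$, and observing that the $\Or(\wt{L})$ adder operations cost only $\poly(m) = \polylog\log(\norm{H}\Delta^{-1})$ gates each (absorbed into $\wt{\Or}$), gives the claimed gate complexities. The ancilla count is $m = \Or(\log\log(\norm{H}\Delta^{-1}))$ for the counter plus $\Or(1)$ for the QETU ancilla and the reflection ancilla of amplitude amplification, i.e. $\Or(\log\log(\norm{H}\Delta^{-1}))$ in total.

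The main obstacle will be the error bookkeeping through the coherent circuit and the amplification: one must verify that replacing each ideal filter $g(\cos(H t_\ell/2))$ by its QETU realization, and the intermediate measurements by the compression-gadget counter, does not degrade the $\gamma/2$ overlap lower bound or the $1-\epsilon$ fidelity — this is exactly where the choice $\epsilon' = \gamma\epsilon$ and the hypothesis $\epsilon \le \gamma^{-1}\log_2^{-1}(\pi\norm{H}/(4\Delta))\ln(3/2)$ enter — and that the projector $\Pi$ (and the reflection about it) is implementable with $\Or(1)$ extra gates so the amplification rounds do not spoil the complexity. A secondary point requiring care is checking $2^m - 1 \ge \wt{L}+1$ so the modular adder genuinely certifies "all levels succeeded" and that the garbage registers on failure branches decohere harmlessly under the deferred-measurement implementation. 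Everything else is a direct transcription of the $\tau$-cutoff argument with $L \mapsto \wt{L}$ and the per-level cost $\lceil t_\ell/\tau\rceil \mapsto (2^\ell)^\alpha$.
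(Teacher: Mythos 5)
Your proposal is correct and follows essentially the same route as the paper: the single-run analysis of \cref{sec:alpha_soft_multi_level_qsp} (stopping at $\wt{L} = \lceil \log_2(\pi\norm{H}/(4\Delta))\rceil$ with per-level cost $(2^\ell)^\alpha d$ and the telescoping overlap bound), made coherent via the compression-gadget counter of \cref{sec:multi-qsp} and then wrapped in $\Or(\gamma^{-1})$ rounds of amplitude amplification. The paper itself simply defers to that earlier machinery, so your more explicit write-up of the counter arithmetic and error bookkeeping is a faithful (and somewhat more detailed) transcription of the intended argument.
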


Though the implementation cost of the \selector oracle in LCU-based method sees improvements in both $\norm{H}$ and $\Delta$ due to the $\alpha$-soft fast-forwarding assumption, the predominant contribution to the overall gate complexity comes from the costly implementation of the \prepor oracle, which incurs a complexity of $\wt{\Or}(\norm{H} \Delta^{-1})$. However, the efficient implementation of \selector oracle with fast-forwarded Hamiltonian evolution underpins the development of a multi-level QSP-based method. This approach benefits from a multi-level filtering strategy enabled by fast-forwarded Hamiltonian evolution. In ideal fast-forwarding, it leads to a method where the overall gate complexity demonstrates an exponential improvement with respect to both the spectral radius and the spectral gap. Moreover, the requirement for ancilla qubits is significantly reduced in comparison to the LCU-based method, courtesy of employing a compression gadget.

\section{Conclusion}\label{sec:conclusion}

This work studies the ground state preparation problem with fast-forwarded Hamiltonian evolution access. Coupled with a binary search algorithm~\cite{LinTong2020a,DongLinTong2022}, the algorithms developed in this work can also be used for ground-state energy estimation.
In the idealized fast-forwarding regime, the cost of the implementation of the \selector oracle in LCU based methods exhibits logarithmic dependence on the spectral radius. However, implementing the \prepor oracle introduces a linear scaling in spectral radius in the gate count. The standard QSP-based matrix function transformation techniques cannot efficiently make use of the fast-forwarding assumptions. We overcome this problem by developing a multi-level QSP method, which integrates fast-forwarded Hamiltonian evolution with the QETU algorithm~\cite{DongLinTong2022}. In the case of ideal fast-forwarding, our multi-level QSP-based approach demonstrates exponential improvements in both gate complexity and the number of ancilla qubits relative to the LCU-based method in terms of spectral radius dependency. 
Consequently, it bridges the gap in complexity dependence on the spectral radius between the \selector oracle in LCU-based method and QSP-based method. While our presentation focuses on the ground state, a minor modification of our multi-level QSP-based method enables the preparation of quantum states supported in a low-energy subspace. It would also be very interesting to explore how the multi-level QSP-based method could be adapted to efficiently compute matrix functions beyond the shifted sign function used for ground state preparation.

\section*{Acknowledgments}

This material is based upon work supported by the U.S. Department of Energy, Office of Science, National Quantum Information Science Research Centers, Quantum Systems Accelerator  (YD).  Additional funding is provided by the Challenge Institute for Quantum Computation (CIQC) funded by NSF through grant number OMA-2016245 (LL). LL is a Simons Investigator in Mathematics. The authors thank Dong An, Zhiyan Ding, Jin-Peng Liu, Murphy Yuezhen Niu, Yu Tong for helpful suggestions.

\bibliographystyle{abbrvnat}

\appendix

\section{Polynomial matrix functions using QETU}\label{app:qetu-derivation}

We begin by examining the circuit structure depicted in \cref{fig:qetu} to justify the polynomial function transformation. The analysis conducted in this section revisits the findings presented in Ref. \cite{DongLinTong2022}. 

\begin{figure}[htbp]
    \centering
    \includegraphics[width = .35\textwidth]{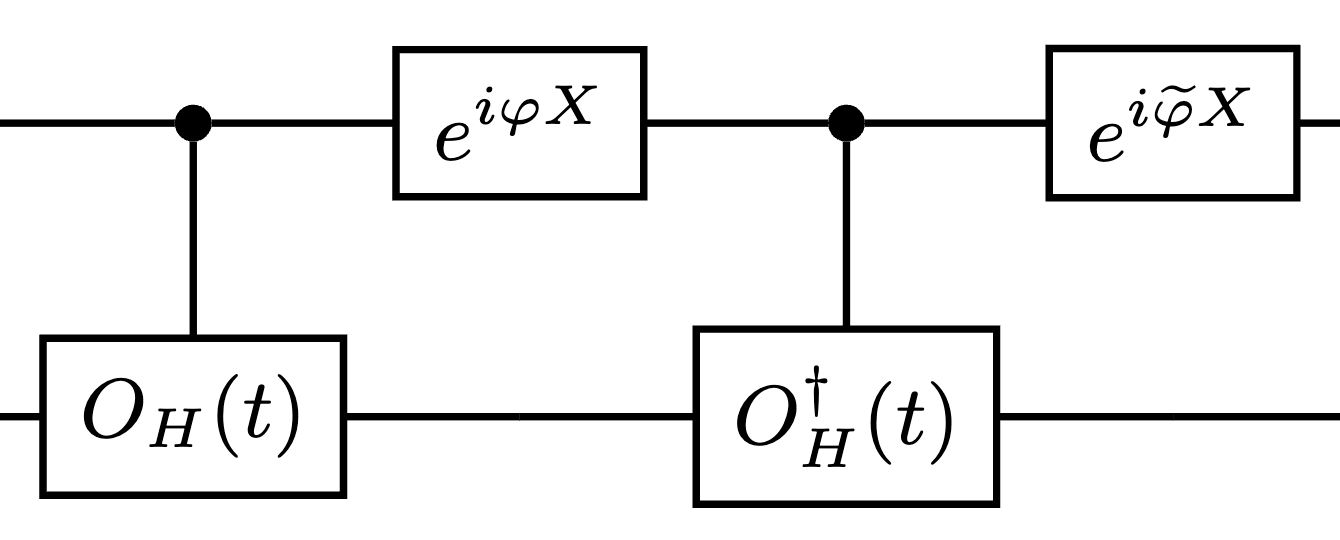}
    \caption{Building block of the \QETU circuit (see \cref{fig:qetu} for details).}
    \label{fig:QETU_building_block}
\end{figure}

The building block of the QETU circuit in \cref{fig:qetu} is depicted in \cref{fig:QETU_building_block} and incorporates two adjustable phase factors $(\varphi, \wt{\varphi})$. When the input quantum state $\ket{\psi}$ of the system registers is an eigenstate of the Hamiltonian with eigenvalue $\lambda$, namely, $H \ket{\psi} = \lambda \ket{\psi}$, this building block can be represented as a two-dimensional matrix in the basis $\{ \ket{0}\ket{\psi}, \ket{1}\ket{\psi}\}$:
\begin{equation}
\begin{split}
    e^{\I \wt{\varphi} X} \begin{pmatrix}
        1 & 0 \\ 0 & e^{\I t \lambda}
    \end{pmatrix} e^{\I \varphi X} \begin{pmatrix}
        1 & 0 \\ 0 & e^{- \I t \lambda}
    \end{pmatrix} &= e^{\I \wt{\varphi} X} e^{- \I t \lambda / 2 Z} e^{\I \varphi X} e^{\I t \lambda / 2 Z}\\
    &= e^{\I (\wt{\varphi} - \pi/2) X} \I X e^{- \I t \lambda / 2 Z} \I X e^{\I (\varphi - \pi / 2) X} e^{\I t \lambda / 2 Z}\\
    &= - e^{\I (\wt{\varphi} - \pi/2) X} e^{- \I t \lambda / 2 X Z X} e^{\I (\varphi - \pi / 2) X} e^{\I t \lambda / 2 Z}\\
    &=  - e^{\I (\wt{\varphi} - \pi/2) X} e^{\I t \lambda / 2 Z} e^{\I (\varphi - \pi / 2) X} e^{\I t \lambda / 2 Z}
\end{split}
\end{equation}
where the anti-commuting relation $XZX = - Z$ is used.

Suppose there are $d + 1$ phase factors in the QETU circuit in \cref{fig:qetu} with $d$ an even number. Then, the above equation implies that the QETU circuit admits the following matrix representation under the basis $\{ \ket{0}\ket{\psi}, \ket{1}\ket{\psi}\}$:
\begin{equation}
\begin{split}
    & (-1)^{d / 2} e^{\I (\varphi_0 - \pi/2) X} e^{\I t \lambda / 2 Z} e^{\I (\varphi_1 - \pi/2) X} e^{\I t \lambda / 2 Z}  \cdots e^{\I t \lambda / 2 Z} e^{\I (\varphi_1 - \pi/2) X} e^{\I t \lambda / 2 Z} e^{\I \varphi_0 X} = \\
    & (-1)^{d / 2} e^{- \I \pi / 4 X} e^{\I (\varphi_0 - \pi/4) X} e^{\I t \lambda / 2 Z} e^{\I (\varphi_1 - \pi/2) X} e^{\I t \lambda / 2 Z}  \cdots e^{\I t \lambda / 2 Z} e^{\I (\varphi_1 - \pi/2) X} e^{\I t \lambda / 2 Z} e^{\I (\varphi_0 - \pi / 4) X} e^{\I \pi / 4 X}.
\end{split}
\end{equation}
Let us consider another set of phase factors defined by
\begin{equation}\label{eqn:phase-factor-connection}
    \phi_j := \varphi_j - (2 - \delta_{j0})\pi / 4, \text{ for } j = 0, \cdots, d / 2.
\end{equation}
By conjugating the representation by Hadamard gates acting on the ancilla qubit, the roles of Pauli X- and Z-matrices are switched:
\begin{equation}
    (-1)^{d / 2} \mathrm{H} e^{- \I \pi / 4 Z} [e^{\I \phi_0 Z} e^{\I t \lambda / 2 X} e^{\I \phi_1 Z} e^{\I t \lambda / 2 X}  \cdots e^{\I t \lambda / 2 X} e^{\I \phi_1 Z} e^{\I t \lambda / 2 X} e^{\I \phi_0 Z}] e^{\I \pi / 4 Z} \mathrm{H}.
\end{equation}
It is worth noting that the set of phase factors is symmetric because $\phi_j = \phi_{d - j}$ for any $j = 0, \cdots, d / 2$. According to \cite[Theorem 1]{WangDongLin2022}, there exists real polynomials $g, h, q \in \RR[x]$ with $\deg(g) = d$ so that the matrix product in the square bracket satisfies
\begin{equation}
    [\cdot] = \begin{pmatrix}
        g(x) + \I h(x) & \I q(x) \sqrt{1 - x^2} \\ 
        \I q(x) \sqrt{1 - x^2} & g(x) - \I h(x)
    \end{pmatrix}
\end{equation}
where $x := \cos(t \lambda / 2)$. Using this result, the two-dimensional matrix representation of the QETU circuit is
\begin{equation}
    (-1)^{d / 2} \mathrm{H} e^{- \I \pi / 4 Z} [\cdot]  e^{\I \pi / 4 Z} \mathrm{H} = (-1)^{d / 2} \begin{pmatrix}
        g(x) & - q(x) \sqrt{1-x^2} + \I h(x) \\
        q(x) \sqrt{1-x^2} + \I h(x)  & g(x)
    \end{pmatrix}.
\end{equation}
Suppose the input quantum state is not an eigenstate and $\ket{\psi} = \sum_j \alpha_j \ket{\psi_j}$, due to linearity and up to a global phase $(-1)^{d / 2}$, the quantum state after acting the QETU circuit is
\begin{equation}
    \ket{0} \sum_j \alpha_j g(\cos(t \lambda_j / 2)) \ket{\psi_j} + \ket{1}\ket{*} = \ket{0} g(\cos(t H / 2)) \ket{\psi} + \ket{1}\ket{*}.
\end{equation}
Here, $\ket{*}$ stands for some unnormalized quantum state that is not of our interest. Hence, by measuring the ancilla qubit with $0$, the QETU circuit transforms the input state according to the matrix function $g(\cos(t H / 2))$.

When a target polynomial $g$ is given, the set of phase factors used in the QETU circuit can be derived as follows. First, we derive a set of symmetric phase factors $\Phi := (\phi_0, \phi_1, \cdots, \phi_1, \phi_0) \in \RR^{d + 1}$ corresponding to the target polynomial $g$ using $\mathrm{QSPPACK}$ software package, which facilitates the computation of these factors \footnote{The examples are available on the website \url{https://qsppack.gitbook.io/qsppack/} and the codes are open-sourced in \url{https://github.com/qsppack/QSPPACK}.}\cite{DongMengWhaleyEtAl2021}. Subsequently, the required phase factors for use in the QETU circuit are calculated based on the relationship specified in \cref{eqn:phase-factor-connection}. In \cref{tab:phase-factor}, we present a set of degree-20 phase factors that were derived using the aforementioned procedure. These phase factors are utilized for approximating the filter function, as shown in \cref{fig:approx}. Additionally, \cref{fig:approx_err} provides a visualization of both the filter function and its pointwise approximation error.

\begin{table}[htbp]
\centering
\begin{tabular}{@{} *{7}{c} @{}}\midrule
\multicolumn{7}{c@{}}{$L^\infty$ error $\displaystyle := \max\left\{\max_{\cos(\pi/8) \le x \le 1} \abs{g(x) - 1}, \max_{0 \le x \le \cos(\pi/4)} \abs{g(x)}\right\} = 0.01333$}\\\midrule
$\varphi_{0}$ & $\varphi_{1}$ & $\varphi_{2}$ & $\varphi_{3}$ & $\varphi_{4}$ & $\varphi_{5}$ & $\varphi_{6}$ \\
1.5641113 & 1.5804045 & 1.5942229 & 1.5741280 & 1.5233379 & 1.5189284 & 1.6198455 \\\midrule
$\varphi_{7}$ & $\varphi_{8}$ & $\varphi_{9}$ & $\varphi_{10}$ & $\varphi_{11}$ & $\varphi_{12}$ & $\varphi_{13}$ \\
1.7237235 & 1.5881872 & 1.1064466 & 0.7862644 & 1.1064466 & 1.5881872 & 1.7237235 \\\midrule
$\varphi_{14}$ & $\varphi_{15}$ & $\varphi_{16}$ & $\varphi_{17}$ & $\varphi_{18}$ & $\varphi_{19}$ & $\varphi_{20}$ \\
1.6198455 & 1.5189284 & 1.5233379 & 1.5741280 & 1.5942229 & 1.5804045 & 1.5641113 \\\midrule
\end{tabular}
\caption{A set of phase factors used in \cref{fig:approx}.}
\label{tab:phase-factor}
\end{table}

\begin{figure}[htbp]
    \centering
    \includegraphics[width=\textwidth]{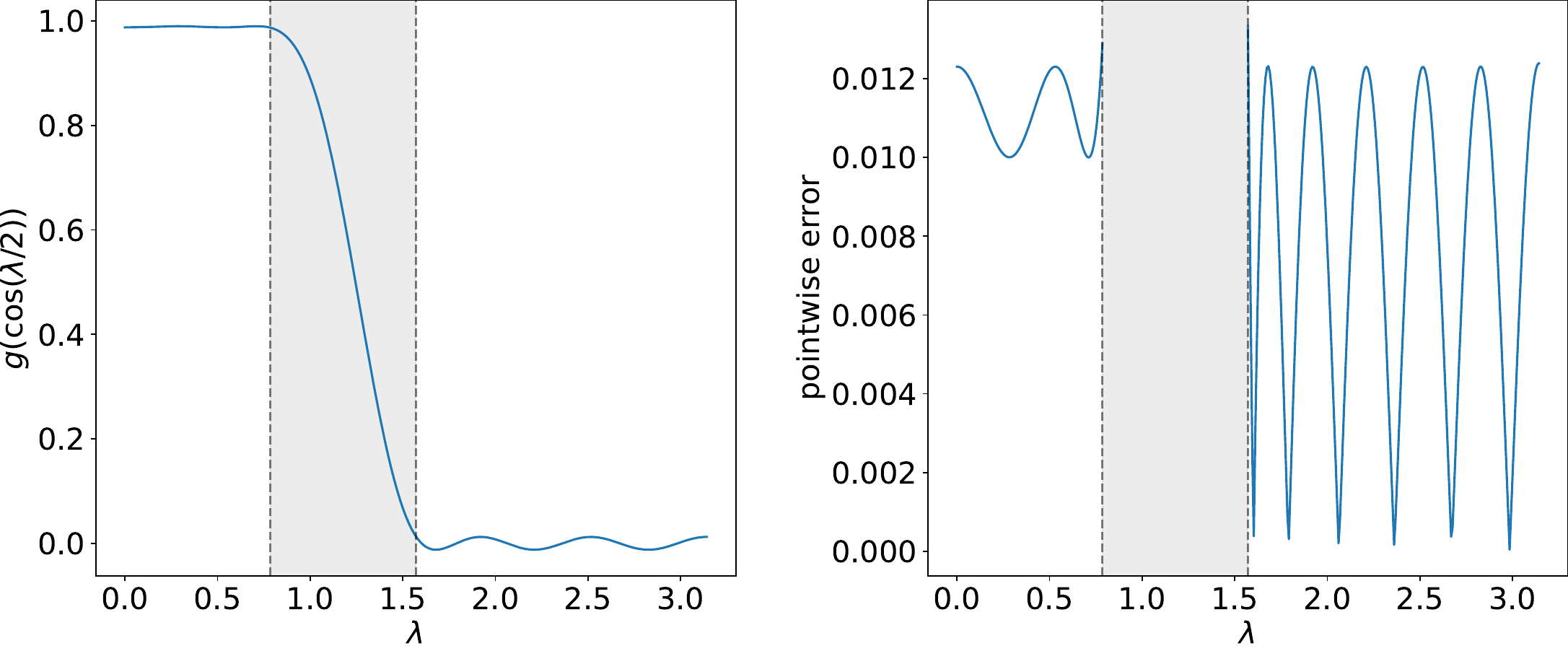}
    \caption{Visualization of the filter function approximation (left) and its pointwise approximation error (right). The error is quantified as $\abs{g(\cos(\lambda / 2)) - 1}$ for $0 \le \lambda \le \pi / 4$, and $\abs{g(\cos(\lambda / 2))}$ for $\pi / 2 \le \lambda \le \pi$. This approximation uses the phase factors listed in \cref{tab:phase-factor}. The dashed vertical lines stand for the boundary values $\lambda = \pi / 4$ and $\pi / 2$.}
    \label{fig:approx_err}
\end{figure}

\section{Propagation of the implementation error of Hamiltonian evolution}\label{sec:HE-error}
In practice, the Hamiltonian evolution is implemented with bounded error as in \cref{def:tau-cutoff-fast-forwarding}. Consequently, analyzing the propagation of the implementation error of Hamiltonian evolution is helpful for understanding the overall performance of the algorithm. For the purposes of this discussion, we will simplify the analysis by considering a circuit model that utilizes $q$ queries to the Hamiltonian evolution operator. Specifically
\begin{align}
    & U_\text{HE} := G_0 e^{- \I t_1 H} G_1 e^{- \I t_2 H} G_2 \cdots G_{q - 1} e^{- \I t_q H} G_q.\\
    & U_\text{OH} := G_0 O_H(t_1) G_1 O_H(t_2) G_2 \cdots G_{q - 1} O_H(t_q) G_q,
\end{align}
where $t_1, \cdots, t_q \le \tau$. Here, $G_0, \cdots, G_q$ stand for some gate operations that are independent with the Hamiltonian evolution. By telescoping, the difference between these two unitaries are bounded:
\begin{equation}
    \norm{U_\text{HE} - U_\text{OH}} \le \sum_{j = 1}^q \norm{e^{- \I t_j H} - O_H(t_j)} \le q \delta.
\end{equation}

To illustrate this, we perform a rough analysis using the derived linear growth error bound when implementing Hamiltonian evolution via the Trotter formula. Considering that the evolution time parameter used in our multi-level QSP algorithm is less than one. Hence, it is reasonable to simplify the analysis under the assumption that
\begin{equation*}
    \sup_{|t| \le 1} \norm{e^{-\I t H} - O_H(t)} \le \delta \le C_T r^{- p}
\end{equation*}
where a $p$-th order Trotter formula is used with $r$ Trotter steps. Here, $C_T$ denotes a prefactor which involves the commutators among Hamiltonian terms in $H$ standing for fast-forwarding. In the case of ideal fast-forwarding, to achieve a target precision $\epsilon$, the total number of queries to the Hamiltonian evolution is $q = \wt{\Or}(\gamma^{-1} \log(\norm{H} \Delta^{-1}) \log(\epsilon^{-1}))$ according to \cref{thm:alpha-soft-multi-level-aa}. Inverting the error bound, in order to achieve a target precision $\epsilon$, the Trotter number is chosen to
\begin{equation}
    r = \wt{\Or}\left( C_T^{1/p} \log^{1/p}(\norm{H}\Delta^{-1}) \gamma^{-1/p} \epsilon^{-1/p} \right).
\end{equation}

\end{document}